\newcommand{\R}{\ensuremath{\mathbb{R}}\xspace}
\newcommand{\etal}{\emph{et~al.}\xspace}
\DeclareMathOperator{\conv}{conv}
\DeclareMathOperator{\NVD}{NVD}
\DeclareMathOperator{\FVD}{FVD}
\DeclareMathOperator{\VD}{VD}
\newtheorem{theorem}{Theorem}[section]
\newtheorem{lemma}[theorem]{Lemma}
\newtheorem{fact}[theorem]{Fact}
\newtheorem{corollary}[theorem]{Corollary}
\title{Improved Time-Space Trade-offs for Computing Voronoi Diagrams
  \thanks{MK~was supported in part by MEXT KAKENHI Nos.~17K12635, 15H02665, and 24106007. 
  BB, WM and PS were supported in part by DFG Grants MU 3501/1 and
  MU 3501/2. YS was supported by the DFG within the research 
  training group ``Methods for Discrete Structures'' (GRK 1408) 
  and by GIF Grant 1161.  
  AvR and MR were supported by JST ERATO Grant 
  Number JPMJER1201, Japan.
  A preliminary version appeared as
  B.~Banyassady, M.~Korman, W.~Mulzer, A.~van Renssen, 
  M.~Roeloffzen, P.~Seiferth, and Y.~Stein.
  \emph{Improved Time-Space Trade-offs for Computing Voronoi Diagrams}.
  Proc.~34th STACS, pp.~9:1--9:14, 2017.}
}
\author{
  Bahareh~Banyassady%
  \thanks{{Institut f\"ur Informatik, Freie Universit\"at Berlin, Berlin, Germany},
 \texttt{[bahareh, mulzer, pseiferth, yannikstein]@inf.fu-berlin.de}.}\,
 \and
  Matias~Korman%
  \thanks{{Tohoku University, Sendai, Japan},
  \texttt{mati@dais.is.tohoku.ac.jp}.}\,
  \and
  Wolfgang~Mulzer%
  \footnotemark[2]\,
  \and
  Andr\'e~van~Renssen%
  \thanks{{School of Information Technologies, University of Sydney, Sydney, Australia}, \newline
  \texttt{andre.vanrenssen@sydney.edu.au}.}\,
  \and
  Marcel~Roeloffzen%
  \thanks{{Department of Mathematics and Computer Science, TU Eindhoven, Eindhoven, the Netherlands},
  \texttt{m.j.m.roeloffzen@tue.nl}.}\,
  \and
  Paul~Seiferth%
  \footnotemark[2]\,
  \and
  Yannik~Stein%
  \footnotemark[2]\,
}
\date{}
\begin{document}

\maketitle

\begin{abstract}
Let $P$ be a planar set of $n$ \emph{sites} in general position. 
For $k \in \{1, \dots, n-1\}$, the Voronoi diagram 
\emph{of order $k$} for $P$ is obtained by subdividing the plane into
\emph{cells} such that points in the same cell have the same 
set of nearest $k$ neighbors in $P$.
The \emph{\textup{(}nearest site\textup{)} Voronoi diagram} ($\NVD$) and 
the 
\emph{farthest site Voronoi diagram} ($\FVD$) are the particular cases 
of $k=1$ and $k=n-1$, respectively. For any given $K \in \{1, \dots, 
n-1\}$, the family of all higher-order Voronoi diagrams of order 
$k = 1, \dots, K$ for $P$ can be computed in total time $O(nK^2+ n
\log n)$ using $O(K^2(n-K))$ space [Aggarwal \etal, DCG'89; Lee, TC'82]. 
Moreover, $\NVD$ and $\FVD$ for $P$ can be computed in $O(n\log n)$ 
time using $O(n)$ space [Preparata, Shamos, Springer'85].

For $s \in \{1, \dots, n\}$, an \emph{$s$-workspace} algorithm has 
random access to a read-only array with the sites of $P$ in arbitrary
order.  Additionally, the algorithm may use $O(s)$ words, of 
$\Theta(\log n)$ bits each, for reading and writing intermediate data.
The output can be written only once and cannot be accessed or 
modified afterwards.

We describe a deterministic $s$-workspace algorithm for computing 
$\NVD$ and $\FVD$ for $P$ that runs in $O((n^2/s)\log s)$ 
time. Moreover, we generalize our $s$-workspace 
algorithm so that for any given $K \in O(\sqrt{s})$, 
we compute the family of all higher-order Voronoi diagrams 
of order $k = 1, \dots, K$ for $P$ in total expected time 
$O\bigl(\frac{n^2 K^5}{s}(\log s + K \, 2^{O(\log^* K)}) \bigr)$ or
in total deterministic time
$O\bigl(\frac{n^2 K^5}{s}(\log s + K \log K) \bigr)$. 
Previously, for Voronoi diagrams, the only known $s$-workspace 
algorithm runs in \emph{expected} time
$O\bigl((n^2/s) \log s + n \log s \log^*s\bigr)$ [Korman \etal, WADS'15]
and only works for $\NVD$ (i.e., $k=1$). Unlike the previous 
algorithm, our new method is very simple and does not rely on 
advanced data structures or random sampling techniques.
\end{abstract}

\section{Introduction}
In recent years, we have seen an explosive growth of small 
distributed devices such as tracking devices and wireless sensors.
These gadgets are small, have only limited energy supply, are easily
moved, and should not be too expensive. To accommodate these needs,
the amount of memory on them is tightly budgeted. This poses a
significant challenge to software developers and algorithm designers:
how to create useful and efficient programs in the presence of strong
memory constraints?

Memory constraints have been studied since the introduction of 
computers (see for example Pohl~\cite{p-amsacm-69}). The first 
computers often had limited memory compared to the available 
processing power. As hardware progressed, this gap narrowed,
other concerns became more important, and the focus of algorithms 
research shifted away from memory-constrained models. However, 
nowadays, memory constraints are again an important problem to tackle
for these new devices as well as for huge datasets that have become 
available through cloud computing.

An easy way to model algorithms with memory constraints is to assume
that the input is stored in a read-only memory. This is appealing for
several reasons. From a practical viewpoint, writing to external
memory is often a costly operation, e.g., if the data resides on a
read-only medium such as a DVD or on hardware where writing is slow
and wears out the material, such as flash memory. Similarly, in 
concurrent environments, writing operations may lead to race 
conditions.  Thus, it is useful to limit or simply disallow writing
operations. From a theoretical viewpoint, this model is also 
advantageous: keeping the working memory separate from the 
(read-only) input memory allows for a more detailed accounting of the
space requirements of an algorithm and for a better understanding of
the required resources. In fact, this is exactly the approach taken
by computational complexity theory. Here, one defines complexity 
classes that model \emph{sublinear} space requirements, such as 
the complexity class of problems that use a logarithmic amount of 
space~\cite{AroraBa09}.

Some of the earliest results in this setting concern the sorting
problem~\cite{mp-ssls-80,mr-sromswmdm-96}.  Suppose we want to sort
data items whose total size is $n$ bits, all of them residing in a
read-only memory. For our computations, we can use a workspace of
$O(b)$ bits freely (both read and write operations are allowed). 
Then, it is known that the time-space product must be
$\Omega(n^2)$~\cite{bc-atstosgsmc-82}, and a matching upper bound for
the case $b\in \Omega(\log n) \cap O(n/\log n)$ was given by Pagter
and Rauhe~\cite{pr-otstofs-98} ($b$ is the available workspace in
bits). A result along these lines is known as a \emph{time-space
  trade-off}~\cite{s-mcepc-08}.

The model used in this work was introduced by 
Asano~\etal~\cite{amrw-cwagp-10}, following similar earlier
models~\cite{BronnimanChCh04,cc-mpga-07}. Asano~\etal provided
constant workspace algorithms for many classic problems from
computational geometry, such as computing convex hulls, Delaunay
triangulations, Euclidean minimum spanning trees, or shortest paths
in polygons~\cite{amrw-cwagp-10}. Since then, the model has enjoyed
increasing popularity, with work on shortest paths in
trees~\cite{AsanoMuWa11} and time-space trade-offs for computing
shortest paths~\cite{abbkmrs-mcasp-11,Har-Peled16}, visibility
regions in simple polygons~\cite{BahooBaBoDuMu17,bkls-cvpufv-14},
planar convex hulls~\cite{bklss-sttosba-14,de-otst2dchp-14},
general plane-sweep algorithms~\cite{ElmasryKa16}, or triangulating
simple polygons~\cite{abbkmrs-mcasp-11,ak-tstanlnp-13,akprr-tstotsp-17}.
We refer the reader to~\cite{k-mca-15} for an overview of different
ways of modeling computation in the presence of space constraints.

Let us specify our model more precisely: we are given a set $P$ of
$n$ \emph{point sites} in the plane. The set $P$ is stored in a
read-only array that allows random access. Furthermore, we may use
$O(s)$ words of memory (for a parameter $s \in \{1, \ldots, n\}$) for
reading and writing. We assume that all the data items and pointers
are represented by $\Theta(\log n)$ bits. Other than this, the model
allows the usual word RAM operations. 

We consider the problem of computing various Voronoi diagrams for
$P$, namely the \emph{nearest site Voronoi diagram} $\NVD(P)$, the
\emph{farthest site Voronoi diagram} $\FVD(P)$, and the family of all
\emph{higher-order Voronoi diagrams} up to a given order
$K\in \{1, \dots,  O(\sqrt{s})\}$.  For most values of $s$, the
output cannot be stored explicitly. Thus, we require that the
algorithm reports the edges of the Voronoi diagrams one by one in a
write-only data structure, separately for each diagram, in increasing
order of $k$. Once written, the output cannot be read or further
modified.  Note that we may report edges of each Voronoi diagram
in any order, but we are not allowed to report an edge more than once.
\vspace{-0.1cm}
\paragraph{Previous Work and Our Results.}
If we forego memory constraints, it is well known that both $\NVD(P)$
and $\FVD(P)$ can be computed in $O(n \log n)$ time using $O(n)$ 
space~\cite{AurenhammerKlLe13,bcko-cgaa-08}. For computing a single 
Voronoi diagram 
of order $k$, the best known randomized algorithm takes
$O\left(n\log n + nk \, 2^{O(\log^*k)}\right)$ time and $O(nk)$
space~\cite{Ramos99}, while the best known 
deterministic algorithm takes $O(n\log n + nk\log k)$ time and
$O(nk)$ space~\cite{chan2000random,ChanTs16}.\footnote{This algorithm
uses the rather involved dynamic planar convex hull structure of 
Brodal and Jacob~\cite{BrodalJa02}. If the reader prefers a more 
elementary method, 
we can substitute the slightly slower, but much simpler, previous 
result by the same authors. The running time then becomes
$O(n\log n + nk\log k\log\log k)$~\cite{BrodalJa00,ChanTs16}.}
For any given $K \in \{1, \dots, n-1\}$, the family
of all higher-order Voronoi diagrams of order $k= 1, \dots, K$ can be
computed in $O(nK^2+n\log{n})$ deterministic time using $O(K^2(n-K))$ 
space~\cite{AggarwalGuSaSh89, Lee82}.
 
In the literature, there are very few memory-constrained algorithms 
that compute Voronoi diagrams.  Asano~\etal~\cite{amrw-cwagp-10}
showed that $\NVD(P)$ can be found in $O(n^2)$ time using $O(1)$
words of workspace. Korman~\etal~\cite{KormanMuReRoSeSt15} gave a
time-space trade-off for computing $\NVD(P)$.  Their algorithm is
based on random sampling and achieves an expected running time of
$O((n^2/s)\log s + n\log s \log^* s)$ using $O(s)$ words of workspace.
We provide time-space trade-offs that improve and generalize the
known memory-constrained algorithms for computing Voronoi diagrams.
We believe that our method is simpler and more flexible than previous methods.
In Section~\ref{sec:o1_nvd_fvd}, we show that the approach of
Asano~\etal~\cite{amrw-cwagp-10} can be used to compute $\FVD(P)$. In
Section~\ref{sec_trade}, we introduce a new time-space trade-off for
computing $\NVD(P)$ and $\FVD(P)$. Unlike the result of
Korman~\etal~\cite{KormanMuReRoSeSt15}, this new algorithm is
deterministic and slightly faster. It runs in $O((n^2/s)\log s)$ time
using $O(s)$ words of workspace, thus saving a $\log^* s$ factor for
large values of $s$. 

Finally, in Section~\ref{sec_high}, we use the $s$-workspace
algorithm from Section~\ref{sec_trade} as a building block in a new
pipelined algorithm.  For any given $K \in O(\sqrt{s})$, this
algorithm computes the family of all higher-order Voronoi diagrams of order
$k = 1, \dots, K$ in total expected time 
$O\bigl(\frac{n^2 K^5}{s}(\log s + K \, 2^{O(\log^* K)})\bigr)$
or in total deterministic time 
$O\bigl(\frac{n^2 K^5}{s}(\log s + K \log K)\bigr)$,
using $O(s)$ words of workspace.
To compute the edges of a Voronoi diagram of order $k$, we use the
edges of the diagram of order $k-1$. However, this needs to be
coordinated carefully, to prevent edges from being reported
multiple times and to not exceed the space budget.

\section{Preliminaries and Notation}

Throughout the paper we denote by $P = \{p_1, \dots, p_n\}$ a
set of $n\geq 3$ \emph{sites} in the plane. We assume general position, 
meaning that no three sites of $P$ lie on
a common line and no four sites of $P$ lie on a common
circle. To fix our terminology, we recall some classic and well-known
properties of Voronoi diagrams~\cite{AurenhammerKlLe13,bcko-cgaa-08}. 

\begin{figure}[tb]
	\centering
	\subcaptionbox{$\NVD(P) = \VD^1(P)$ \label{fig:1a}}{\includegraphics[page=1]{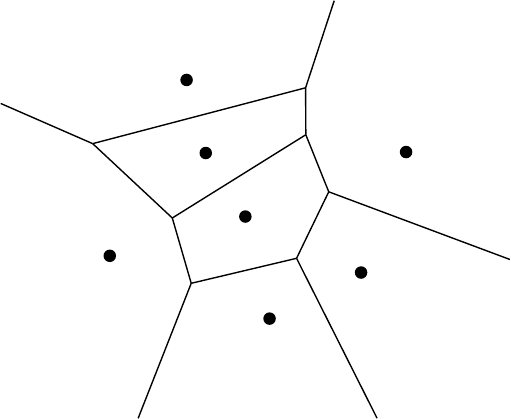}}
	\subcaptionbox{$\FVD(P) = \VD^{n-1}(P)$ \label{fig:1b}}{\includegraphics[page=2]{1}}
	\subcaptionbox{$\VD^2(P)$ \label{fig:1c}}{\includegraphics[page=3]{1}}
	\caption{For $P$, a set of planar sites (a) The nearest site Voronoi diagram (b) The farthest site Voronoi diagram (c) The Voronoi diagram of order $2$.}
\end{figure}

The \emph{nearest site Voronoi diagram}
for $P$, $\NVD(P)$, is obtained by classifying the points
in the plane according to their nearest neighbor in
$P$. For each site 
$p \in P$, the open set
of points in $\R^2$ with $p$ as their unique nearest
site in $P$ is called the \emph{Voronoi cell} of $p$.
For any two sites $p, q \in P$, the \emph{bisector} $B(p,q)$ of $p$
and $q$ is defined as the line containing all points in the plane 
that are equidistant to $p$ and $q$.
The \emph{Voronoi edge} for $p$, $q$ consists of all points 
in the plane with $p$ and $q$ as their only two nearest sites. 
If it exists, the Voronoi edge for $p$ and $q$ is a subset of the 
bisector $B(p,q)$ of $p$ and $q$.
Our general position assumption, and the fact that $n \geq 3$, guarantee 
that each Voronoi edge is an open line segment or a halfline. 
\emph{Voronoi vertices} are the points in the plane that have exactly three nearest sites
in $P$. Again by general position, we have that every point in $\R^2$ is either a
Voronoi vertex, or lies on a Voronoi edge or in a Voronoi cell.
The Voronoi vertices and the Voronoi edges form the set of vertices
and edges of a plane graph whose faces are the Voronoi cells. 
This graph is called the nearest site Voronoi diagram for $P$,
$\NVD(P)$; see Figure~\ref{fig:1a}.
It has $O(n)$ vertices, $O(n)$ edges, and $n$ cells.

The \emph{farthest site Voronoi diagram} for $P$, $\FVD(P)$,
is defined analogously. Farthest Voronoi cells, edges, and vertices
are obtained by replacing the term ``nearest site'' by
the term ``farthest site'' in the respective definitions.
Again, the farthest Voronoi vertices and edges constitute the
vertices and edges of a plane graph, called $\FVD(P)$.
As before, it has $O(n)$ vertices and $O(n)$ edges. However,
unlike in $\NVD(P)$, in $\FVD(P)$ it is not necessarily the
case that all sites in $P$ have a corresponding cell in 
$\FVD(P)$. Indeed, the sites with non-empty farthest Voronoi
cells are exactly the sites on the \emph{convex hull}
of $P$, $\conv(P)$. Furthermore, all cells in $\FVD(P)$ are
unbounded. Hence, $\FVD(P)$, considered as a plane graph,
is a tree; see Figure~\ref{fig:1b}.

Now, for $k \in \{1, \dots, n-1\}$, the Voronoi diagram \emph{of order $k$}
for $P$ is obtained by classifying the points in the plane into 
cells, edges, and vertices according to the \emph{set} of
sites in $P$ that achieve the $k$ smallest distances.
We denote the Voronoi diagram of order $k$ for $P$ by $\VD^k(P)$; see 
Figure~\ref{fig:1c}.
Observe that $\NVD(P) = \VD^1(P)$ and $\FVD(P) = \VD^{n-1}(P)$. 
For each set $Q \subset P$ of $k$ sites from
$P$, we denote the \emph{Voronoi cell of order $k$} for $Q$ by $C^k(Q)$.
It is known that $\VD^k(P)$ is a plane graph of
complexity $O(k(n-k))$~\cite{AurenhammerKlLe13,Lee82}. For simplicity,
the cell of $p\in P$ in $\NVD(P)$ and $\FVD(P)$ are denoted,
respectively, by $C^1(p)$ and $C^{n-1}(p)$.
We will denote the boundary of a cell $C$ by $\partial C$.
We will give more properties of higher-order Voronoi diagrams
in Section~\ref{sec_high}.

\section{A Constant Workspace Algorithm for FVDs and NVDs}
\label{sec:o1_nvd_fvd}

\begin{figure}
  \centering
    \includegraphics{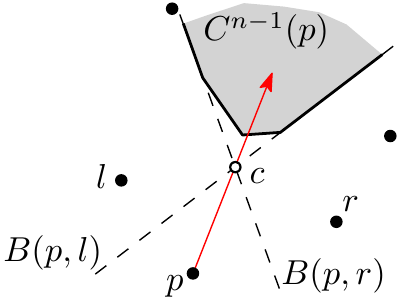}
\caption{An illustration of Facts~\ref{fa:relation of CH and FVD} 
  and~\ref{fa:ray between two unbounded edges}: The sites
$l,p,r \in P$ are consecutive on $\conv(P)$. The boundary 
$\partial C^{n-1}(p)$ contains a subset of $B(p,l)$ and of $B(p,r)$. 
The ray from $p$ toward $c = B(p,l) \cap B(p,r)$ intersects 
$\partial C^{n-1}(p)$.}
\label{fig:2}
\end{figure}

We are given a set 
$P=\{p_1, \dots, p_n\}$ of $n$ sites in
the plane stored in a 
read-only array to which we have random access. Our task is to 
report the edges of $\NVD(P)$ and of 
$\FVD(P)$ using only a constant amount of 
additional workspace. First, we show how 
to find a single edge of a given cell of $\NVD(P)$ 
or of $\FVD(P)$. Then, we repeatedly use this procedure
to find all the edges of $\NVD(P)$ and $\FVD(P)$. 
We summarize the properties of $\FVD(P)$ that are 
relevant to our algorithms in the following
two facts. More details can be found, e.g., in the book 
by Aurenhammer, Klein, and Lee~\cite{AurenhammerKlLe13}.
See \textup{Figure~\ref{fig:2}} for an illustration.

\begin{fact}\label{fa:relation of CH and FVD}
Let $P$ be a set of $n$ point sites in the plane in general position,
and let $p \in P$. The cell $C^{n-1}(p)$ is not empty if and only if 
$p$ lies on the convex hull of $P$. In this case,
the farthest Voronoi cell of $p$ is unbounded. Furthermore,  
if $r, l\in P$ are the
two adjacent sites of $p$ on $\conv(P)$, then $C^{n-1}(p)$ contains
an unbounded edge for $p$ and $l$ and an unbounded edge
for $p$ and $r$. These edges are subsets
of $B(p, l)$ and of $B(p,r)$, respectively.
\end{fact}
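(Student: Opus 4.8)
The plan is to prove the three assertions in turn, each by a short computation about which site of $P$ is \emph{farthest} from a point placed far away in a carefully chosen direction.

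For the equivalence, I would first do the ``only if'' direction by contraposition. If $p$ is not a vertex of $\conv(P)$, then the general position assumption (no three collinear sites) forces $p$ to lie in the \emph{interior} of $\conv(P)$, so $p = \sum_i \lambda_i q_i$ is a strict convex combination of hull vertices $q_i \in P \setminus \{p\}$. For every $x \in \R^2$, convexity of $z \mapsto \|x-z\|^2$ yields $\|x-p\|^2 \le \sum_i \lambda_i \|x-q_i\|^2 \le \max_i \|x-q_i\|^2$, so some hull vertex is at least as far from $x$ as $p$ is; hence $p$ is never the unique farthest site and $C^{n-1}(p) = \emptyset$. For the ``if'' direction, if $p$ is a vertex of $\conv(P)$, choose a direction $\vec d$ with $\langle \vec d,\, q-p\rangle < 0$ for all $q \in P \setminus \{p\}$ (an outward normal of a supporting line of $\conv(P)$ at $p$, which at a vertex can be taken so that the inequality is strict). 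Putting $x = p - t\vec d$, the identity $\|x-q\|^2 - \|x-p\|^2 = \|q\|^2 - \|p\|^2 - 2\langle x,\, q-p\rangle$ shows this difference tends to $-\infty$ as $t \to \infty$, so for all large $t$ the site $p$ is the unique farthest site from $x$ and $x \in C^{n-1}(p)$. The same argument in fact places an entire ray inside $C^{n-1}(p)$, which proves unboundedness.

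For the two unbounded bisector edges, fix the hull edge $pl$ (the edge $pr$ is symmetric). Let $m$ be the midpoint of $pl$ and let $\vec n$ be the unit normal of the line through $p$ and $l$ pointing to the side containing no other site. For $x = m - t\vec n$ we have, since $m-p$ and $m-l$ are orthogonal to $\vec n$, that $x \in B(p,l)$ for all $t$ and $\|x-p\|^2 = \|x-l\|^2 = \|m-p\|^2 + t^2$. For any $q \in P \setminus \{p,l\}$ the inequality $\langle m-q,\, \vec n\rangle > 0$ holds (because $q$ lies strictly on the side of the line $pl$ opposite to $\vec n$, as $pl$ is a hull edge), so $\|x-q\|^2 = t^2 - 2t\langle m-q,\,\vec n\rangle + \|m-q\|^2 < \|x-p\|^2$ once $t$ is large. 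Hence for all sufficiently large $t$ the two farthest sites from $x$ are exactly $p$ and $l$, so this ray of $B(p,l)$ lies on the farthest Voronoi edge of $p$ and $l$; in particular that edge exists, is unbounded, lies in $B(p,l)$, and bounds $C^{n-1}(p)$.

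I expect the only delicate point to be the choice of the \emph{inward} orientation $-\vec n$ (respectively $-\vec d$) rather than its opposite: moving \emph{into} the hull is what makes $p$ (and $l$) the farthest sites rather than the nearest, and getting the signs in the inner products right is where a careless proof would go wrong. The rest is routine: two elementary limit computations, plus the observation that general position guarantees a non-hull site lies in the open interior of $\conv(P)$ and that ``the two farthest sites'' is well defined along the constructed rays.
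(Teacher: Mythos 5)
Your proof is correct. Note that the paper itself does not prove this statement: it is presented as a known Fact, with a pointer to the book of Aurenhammer, Klein, and Lee, so there is no ``paper's approach'' to compare against; your argument is a self-contained elementary substitute for that citation. The three computations all check out: the Jensen step $\|x-p\|^2 \le \sum_i \lambda_i\|x-q_i\|^2 \le \max_i\|x-q_i\|^2$ correctly shows an interior site is never the \emph{unique} farthest site (and general position plus $n\ge 3$ indeed rules out $p$ lying in the relative interior of a hull edge); the ray $x=p-t\vec d$ with $\vec d$ a strictly supporting outward normal at the vertex $p$ does give $\|x-q\|^2-\|x-p\|^2\to-\infty$ for each of the finitely many $q$, hence a whole ray inside $C^{n-1}(p)$; and on the ray $x=m-t\vec n$ the sites $p$ and $l$ stay tied while every other site becomes strictly closer, so the farthest Voronoi edge for $p,l$ contains an unbounded piece of $B(p,l)$. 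You are also right that the only real pitfall is the orientation of the rays (moving \emph{into} the hull from $p$, respectively toward the populated side of the line $pl$). The one point you pass over quickly is the step from ``a ray of $B(p,l)$ consists of points whose two farthest sites are exactly $p$ and $l$'' to ``this ray lies on an unbounded edge bounding $C^{n-1}(p)$''; this uses that such points are limits of points with $p$ as unique farthest site and that the edge is connected, both of which follow from the convexity of farthest-site cells and the general-position setup the paper already assumes, so it is acceptable at the level of detail appropriate for this Fact.
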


\begin{fact}\label{fa:ray between two unbounded edges}
Let $P$ be a set of $n$ point sites in the plane in general position.
Let $l, p, r \in P$ be three consecutive sites on $\conv(P)$,
and let $c$ be the intersection of $B(p,l)$ and $B(p,r)$. 
Then, the ray from 
$p$ toward $c$ intersects $\partial C^{n-1}(p)$ \textup{(}not necessarily at $c$\textup{)}. 
\end{fact}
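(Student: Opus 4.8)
The plan is to prove that the ray $\gamma$ from $p$ toward $c$ meets the closed cell $\overline{C^{n-1}(p)}$, and then to upgrade this to an intersection with $\partial C^{n-1}(p)$ using the fact that $p$ itself lies outside the cell. Indeed, for any site $q \neq p$ the closed cell satisfies $\overline{C^{n-1}(p)} \subseteq \{x : d(x,p) \ge d(x,q)\}$, whereas $d(p,p) = 0 < d(p,q)$; hence $p \notin \overline{C^{n-1}(p)}$, so once we know that $\gamma$ reaches a point of $\overline{C^{n-1}(p)}$, continuity yields a crossing point on $\partial C^{n-1}(p) = \overline{C^{n-1}(p)} \setminus C^{n-1}(p)$. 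Writing $H_q$ for the open halfplane $\{x : d(x,q) < d(x,p)\}$ with boundary line $B(p,q)$, the same inclusion gives $\overline{C^{n-1}(p)} \subseteq \overline{H_l}\cap\overline{H_r} =: W$; since $l,p,r$ are not collinear (general position), $B(p,l)$ and $B(p,r)$ are not parallel, so $W$ is a wedge whose apex is their unique common point, namely $c$, with two bounding rays issuing from $c$, one along $B(p,l)$ and one along $B(p,r)$.

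Next I would pin down where $\gamma$ sits relative to $W$. Parametrize $\gamma(t) = p + t(c-p)$, so $\gamma(1) = c$. Because $c$ lies on $B(p,l)$, which passes through the midpoint of $p,l$ and is perpendicular to $l-p$, we get $\langle c-p,\, l-p\rangle = \tfrac12\|l-p\|^2 > 0$, and likewise $\langle c-p,\, r-p\rangle = \tfrac12\|r-p\|^2 > 0$. Expanding $d(\gamma(t),p)^2 - d(\gamma(t),q)^2$ for $q\in\{l,r\}$ and substituting these two identities collapses everything to
\[
  d(\gamma(t),p)^2 - d(\gamma(t),q)^2 \;=\; (t-1)\,\|q-p\|^2 ,
\]
so $\gamma(t)\in H_l\cap H_r = \operatorname{int} W$ for every $t>1$. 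In particular the direction $v:=c-p$ points from $c$ strictly into $W$; equivalently, $v$ lies in the interior of the pointed cone $W-c$, which is spanned by the two bounding directions of $W$.

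Finally I would invoke Fact~\ref{fa:relation of CH and FVD}: $C^{n-1}(p)$ is nonempty and its boundary contains an unbounded edge on $B(p,l)$ and one on $B(p,r)$. Since $\overline{C^{n-1}(p)}\subseteq W$, each of these edges is a subray of the corresponding bounding ray of $W$, so both bounding directions of $W$ are directions of unbounded rays contained in $\overline{C^{n-1}(p)}$, i.e.\ recession directions of this closed convex set. Together with $\overline{C^{n-1}(p)}\subseteq W$ (which forces the recession cone inside $W-c$), the recession cone of $\overline{C^{n-1}(p)}$ is exactly $W-c$. Fixing any $x_0\in C^{n-1}(p)$, we thus have $x_0 + (W-c)\subseteq \overline{C^{n-1}(p)}$; and since $v$ is interior to $W-c$, the point $c+sv$ lies in $x_0 + (W-c)$ for all sufficiently large $s$. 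Hence $\gamma$ eventually enters $\overline{C^{n-1}(p)}$, and, as $\gamma(0)=p$ lies outside it, $\gamma$ crosses $\partial C^{n-1}(p)$; this crossing is the desired intersection point (the argument is insensitive to whether $c$ itself happens to lie in $\overline{C^{n-1}(p)}$).

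I expect the last step to be the crux. The naive claim ``$\overline{C^{n-1}(p)}$ is unbounded in direction $v$, hence the ray from $c$ in direction $v$ enters it'' is false in general (a halfplane is unbounded in many directions whose rays it does not contain). What rescues the argument is precisely that Fact~\ref{fa:relation of CH and FVD} forces the recession cone of the cell to be the \emph{entire} wedge $W-c$, not just one ray, combined with the elementary computation placing $v$ in the interior of that wedge; the rest is bookkeeping — staying with the closed cell and phrasing the conclusion in terms of $\partial C^{n-1}(p)$.
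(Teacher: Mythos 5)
Your proof is correct. Note that the paper does not prove this statement at all: it is labelled a ``fact'' and delegated to the literature (the book of Aurenhammer, Klein, and Lee), so there is no in-paper argument to compare against; what you have written is a valid self-contained derivation. The three ingredients all check out: (a) $p\notin\overline{C^{n-1}(p)}$, and $\overline{C^{n-1}(p)}$ is contained in the wedge $W=\overline{H_l}\cap\overline{H_r}$ with apex $c$; (b) the identity $d(\gamma(t),p)^2-d(\gamma(t),q)^2=(t-1)\|q-p\|^2$ for $q\in\{l,r\}$ follows from $\langle c-p,q-p\rangle=\tfrac12\|q-p\|^2$ and places $\gamma(t)$ in $\operatorname{int}W$ for all $t>1$, so $v=c-p$ is interior to the cone $W-c$; (c) the two unbounded edges guaranteed by Fact~\ref{fa:relation of CH and FVD} lie on $B(p,l)\cap W$ and $B(p,r)\cap W$, hence point in the two extreme directions of $W-c$, so the recession cone of the closed convex set $\overline{C^{n-1}(p)}$ contains all of $W-c$ and the ray eventually enters the closed cell, forcing a boundary crossing since it starts outside. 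You correctly identify (c) as the crux --- the naive ``the cell is unbounded in direction $v$'' argument would indeed be insufficient. One small remark: you do not need the recession cone to be \emph{exactly} $W-c$; the inclusion $W-c\subseteq$ recession cone (which is what the two unbounded edges give you, by convexity of recession cones) already yields $x_0+(W-c)\subseteq\overline{C^{n-1}(p)}$, which is all the final step uses. The claim of equality is true but superfluous.
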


\begin{lemma}\label{le:a ray intersecting the boundary}
Let $P$ be a set of $n$ point sites in the plane in general position.
Suppose that $P$ is given in a read-only array.
For any $p \in P$, in $O(n)$ time and
using constant workspace, we can
determine whether $C^{n-1}(p)$ is not empty.
If so, we can also find a ray that intersects 
$\partial C^{n-1}(p)$.
\end{lemma}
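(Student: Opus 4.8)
The plan is to use Facts~\ref{fa:relation of CH and FVD} and~\ref{fa:ray between two unbounded edges} directly. The key realization is that to find the ray it suffices to determine whether $p$ is on $\conv(P)$ and, if so, to identify its two neighbors $l$ and $r$ on the convex hull; then the ray from $p$ toward $c = B(p,l) \cap B(p,r)$ is the desired output by Fact~\ref{fa:ray between two unbounded edges}. All of this is a constant number of scans over the read-only array.

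First I would test whether $p \in \conv(P)$. A clean way: $p$ is on the convex hull if and only if there is a closed halfplane through $p$ containing all of $P$, equivalently if the points of $P \setminus \{p\}$, sorted by angle around $p$, do not span more than $\pi$ in angular extent — but to avoid sorting (which would exceed constant workspace) I would instead note that $p \in \conv(P)$ iff some directed line through $p$ has all of $P$ weakly on one side. Concretely, for each $q \in P\setminus\{p\}$, consider the ray from $p$ through $q$; $p$ is a hull vertex iff for some such ray, rotating it slightly, all other points lie in one halfplane. Even simpler: take the point $q^\star \in P$ farthest from $p$ (one scan, constant workspace); then $p\in\conv(P)$ iff $p$ is a vertex of $\conv(P)$, which holds iff all points of $P\setminus\{p\}$ lie strictly on one side of \emph{some} line through $p$; this can be checked by walking the ``gift-wrapping'' step from $p$: find the point $l$ that is most counterclockwise as seen from $p$ relative to the direction toward $q^\star$, and the point $r$ most clockwise; then $p\in\conv(P)$ iff the angle $\angle\, l\,p\,r$ (on the appropriate side) is less than $\pi$, i.e.\ iff $l$ and $r$ lie in a common halfplane bounded by a line through $p$. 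Each of these extremal searches is a single $O(n)$-time, $O(1)$-space scan using only orientation predicates, so the whole test runs in $O(n)$ time with constant workspace, and when $p$ is on the hull it simultaneously yields its two hull-neighbors $l$ and $r$.

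Having found $l$ and $r$, I compute $c = B(p,l)\cap B(p,r)$ in $O(1)$ time (two line-line intersections; general position guarantees the bisectors are not parallel, since $p$, $l$, $r$ are not collinear). The output is the ray from $p$ toward $c$. By Fact~\ref{fa:relation of CH and FVD}, $C^{n-1}(p)$ is nonempty precisely in the case we have identified, and by Fact~\ref{fa:ray between two unbounded edges} this ray meets $\partial C^{n-1}(p)$, as required. If the hull test reports that $p$ is not a vertex of $\conv(P)$, we report that $C^{n-1}(p)$ is empty. Total time $O(n)$, total extra space $O(1)$.

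The main obstacle is purely one of care rather than depth: implementing the ``is $p$ on the convex hull, and who are its neighbors'' subroutine using only orientation predicates and $O(1)$ words, without sorting, and getting the degenerate-direction bookkeeping right (choosing a reference direction, breaking ties consistently, handling the case where $p$ is an endpoint of a hull edge). Once that scan is set up correctly, the rest is an immediate application of the two stated facts.
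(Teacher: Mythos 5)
Your proposal is correct and follows essentially the same route as the paper: a single gift-wrapping-style scan with a reference direction to find the angularly extremal points $l$ and $r$, a check of whether the resulting cone spans more or less than $\pi$ to decide hull membership, and then Fact~\ref{fa:ray between two unbounded edges} applied to the ray toward $B(p,l)\cap B(p,r)$. The only cosmetic difference is your choice of the farthest point as the reference direction where the paper uses an arbitrary $q\in P\setminus\{p\}$; this does not affect correctness or the bounds.
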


\begin{proof}
By Fact~\ref{fa:relation of CH and FVD}, it suffices
to check whether $p$ lies inside $\conv(P)$. This
can be done using simple \emph{gift-wrapping}: 
pick an arbitrary site $q \in P\setminus\{p\}$. Scan 
through $P$ and find the sites $p_\text{cw}$ and 
$p_\text{ccw}$ in $P$ which make, respectively, the 
largest clockwise angle and the largest 
counterclockwise angle with the ray $pq$, such that both 
angles are at most $\pi$. 
Both $p_\text{cw}$ and $p_\text{ccw}$ are easily obtained in 
$O(n)$ time using constant workspace. If the 
cone $p_\text{cw}pp_\text{ccw}$ that contains $q$ has an 
opening angle larger than $\pi$, then $p$ is inside $\conv(P)$ 
and consequently $C^{n-1}(p)$ is empty. 
Otherwise, $p$ is on $\conv(P)$, with $p_\text{cw}$ 
and $p_\text{ccw}$ as its two neighbors. 
By Fact~\ref{fa:ray between two unbounded edges}, the ray
from $p$ through $B(p,  p_\text{cw}) \cap B(p, p_\text{ccw})$ 
intersects $\partial C^{n-1}(p)$.
\end{proof}

\begin{lemma}\label{le:finding an edge}
Let $P$ be a planar $n$-point set in general position 
in a read-only array. Suppose we are given a site $p \in P$ and a ray $\gamma$ that
emanates from $p$ and intersects 
$\partial C^1(p)$. Then, we can 
report an edge $e$ of $C^1(p)$ 
 that intersects $\gamma$, 
in $O(n)$ time using $O(1)$ words of workspace.
An analogous statement holds for $\FVD(P)$.
\end{lemma}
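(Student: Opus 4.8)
The plan is to exploit the fact that $C^1(p)$ is the intersection of the $n-1$ halfplanes $H(p,q) = \{x \in \R^2 : d(x,p) \le d(x,q)\}$, $q \in P \setminus \{p\}$, so that $C^1(p)$ is a convex (possibly unbounded) polygon with $p$ in its interior, and every edge of $C^1(p)$ lies on some bisector $B(p,q)$. Write $\gamma(t) = p + tv$ for $t \ge 0$, with $v$ the direction of $\gamma$. Since $p$ lies in the open cell and $\gamma$ meets $\partial C^1(p)$, the intersection $\overline{C^1(p)} \cap \gamma$ is an interval $[0,t^*]$ with $t^* < \infty$, and $\gamma(t^*) \in \partial C^1(p)$; the bisector supporting the boundary at the exit point $\gamma(t^*)$ will supply the desired edge.

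First I would locate $t^*$ together with a witness $q^* \in P\setminus\{p\}$ such that $\gamma(t^*) \in B(p,q^*)$. For each $q$, the set $\{t \ge 0 : \gamma(t)\in H(p,q)\}$ contains $0$ and is either all of $[0,\infty)$ or an interval $[0,t_q]$ with a finite right endpoint; since $d(\gamma(t),p)^2 - d(\gamma(t),q)^2$ is affine in $t$, this case distinction and the value $t_q$ (set $t_q = +\infty$ in the first case) are obtained from $p$, $q$, and $v$ in $O(1)$ time. One scan through $P$ then gives $t^* = \min_q t_q$ and a minimizing index $q^*$, in $O(n)$ time and $O(1)$ words; the hypothesis on $\gamma$ ensures $t^* < \infty$.

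Next I would recover the edge $e = \overline{C^1(p)} \cap B(p,q^*)$. Parametrize the line $B(p,q^*)$ as $\ell(u)$, $u \in \R$. On this line $d(\ell(u),p) = d(\ell(u),q^*)$, so $\ell(u) \in \overline{C^1(p)}$ iff $d(\ell(u),p) \le d(\ell(u),r)$ for every $r \in P\setminus\{p,q^*\}$; each such inequality is again affine in $u$ and hence restricts $u$ to a halfline (possibly all of $\R$, but never empty, since $\ell(u^*)=\gamma(t^*)$ satisfies all of them). Intersecting these halflines in one more $O(n)$-time, $O(1)$-word pass yields a (possibly one-sidedly unbounded) interval $[u_{\min},u_{\max}]$; by convexity of $\overline{C^1(p)}$ this interval is exactly $\overline{e}$, and it contains $u^*$. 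I then report $e$ as the segment $\ell(u_{\min})\,\ell(u_{\max})$, or as the appropriate halfline when one endpoint is infinite (by the preliminaries an edge is never a full line). By construction $e$ meets $\gamma$ at $\gamma(t^*)$. The one point that needs a word of care is when $\gamma$ exits through a Voronoi vertex: then $\gamma(t^*)$ is an endpoint of $e$ rather than an interior point, but $e$ is still a genuine edge incident to that vertex, it still meets $\gamma$, and the procedure returns it (if several bisectors attain the minimum at $t^*$, any minimizing $q^*$ works).

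The farthest-site case is the mirror image: $C^{n-1}(p) = \bigcap_q \{x : d(x,p) \ge d(x,q)\}$ is again an intersection of halfplanes, but now $p \notin \overline{C^{n-1}(p)}$, so each per-site constraint along $\gamma$ becomes a lower bound, $\gamma$ \emph{enters} the cell at $t^* = \max_q t_q$, and from there everything proceeds verbatim. I expect the only real subtlety in the argument to be the correctness claim that intersecting the per-site halfline constraints along $B(p,q^*)$ reproduces precisely the closed edge $\overline{e}$ (connectedness comes from convexity, non-degeneracy from $\gamma(t^*)\in\partial C^1(p)$), plus the bookkeeping for unbounded edges and the through-a-vertex case; the time and space bounds are immediate, since the whole computation consists of two linear scans keeping only a constant number of values.
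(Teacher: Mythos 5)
Your proposal is correct and follows essentially the same two-scan strategy as the paper: one pass to find the bisector whose intersection with $\gamma$ is closest to $p$ (farthest, for the farthest-site case), and a second pass to trim that bisector down to the Voronoi edge by cutting away the parts dominated by other sites. Your halfplane/affine-function phrasing and the handling of the vertex and unbounded cases match the paper's argument in substance.
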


\begin{proof}
Among all bisectors $B(p, p')$, for $p' \in P \setminus \{p\}$,
we find a bisector $B^* = B(p, p^*)$ that intersects $\gamma$ closest to 
$p$.\footnote{If $\gamma$ happens to intersect a vertex of $C^1(p)$,
there are two such bisectors. Otherwise, $B^*$ is unique.} 
We can find $B^*$ by scanning the sites of 
$P$ and maintaining a closest bisector in each step.
The edge $e$ is a subset of~$B^*$.
To find the portion of $B^*$ that forms a Voronoi edge in $\NVD(P)$, we do a second scan of 
$P$. For each $p' \in P \setminus \{p, p^*\}$, we check where $B(p, p')$ 
intersects $B^*$. Each such intersection cuts a
piece from $B^*$ that cannot appear in $\NVD(P)$, namely
the part of $B^*$ that is closer to $p'$ than to $p$. 
After scanning all the sites of $P$, the remaining portion of $B^*$ is exactly $e$.
Since the current piece of $B^*$ in each step is connected,
we need to store only at most two endpoints in each step.
Overall, we can find the edge $e$ 
of $C^1(p)$ that intersects $\gamma$ in $O(n)$ time using 
$O(1)$ words of workspace.

The procedure for $\FVD(P)$ is analogous, but we take
$B^*$ to be the bisector intersecting $\gamma$ farthest from $p$,
and we cut from $B^*$ the pieces that are closer to $p$ than to any other site.
\end{proof}

\begin{theorem}\label{T:constant workspace algo}
Suppose we are given a planar $n$-point set 
$P=\{p_1, \dots, p_n\}$ in general position in
a read-only array. We can find all the edges of 
$\NVD(P)$ in $O(n^2)$ time 
using $O(1)$ words of workspace.
The same holds for $\FVD(P)$.
\end{theorem}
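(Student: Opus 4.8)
The plan is to iterate the single-edge procedure of Lemma~\ref{le:finding an edge} over all sites, while being careful (i) to visit \emph{every} edge and (ii) to report each edge exactly once. First I would handle $\NVD(P)$. For each site $p_i$, $i = 1, \dots, n$, the cell $C^1(p_i)$ is a nonempty convex polygon, so I can start by using (a constant-workspace version of) Lemma~\ref{le:a ray intersecting the boundary}'s gift-wrapping idea — or even more simply, shoot a ray from $p_i$ toward some fixed reference site — to obtain one ray $\gamma$ that meets $\partial C^1(p_i)$, then invoke Lemma~\ref{le:finding an edge} to get one edge $e$ of $C^1(p_i)$ in $O(n)$ time and $O(1)$ words. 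From this seed edge I can walk around $\partial C^1(p_i)$: given the current edge $e$ and an endpoint $w$ of $e$ (a Voronoi vertex, hence a point with exactly three nearest sites), I can determine the next edge along the boundary by shooting a ray from $p_i$ into the interior of the cell just past $w$ and re-running Lemma~\ref{le:finding an edge}; equivalently, a single extra scan of $P$ around $w$ identifies the third defining site and hence the adjacent edge. Each step costs $O(n)$ time, the cell has $O(1)$ edges amortized over all cells (total edge count is $O(n)$), so enumerating all edges of all cells costs $O(n^2)$ time and $O(1)$ words.

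To avoid reporting an edge twice — each Voronoi edge bounds exactly two cells — I would use the standard tie-breaking rule: the edge $e = \partial C^1(p_i) \cap \partial C^1(p_j)$ is reported only when processing the site of smaller index, i.e.\ when $i < j$. Concretely, when the boundary walk of $C^1(p_i)$ produces an edge $e$ lying on the bisector $B(p_i, p_j)$, I compare the two indices and emit $e$ iff $i < j$; the index $j$ of the neighbouring site is already known from the scan in Lemma~\ref{le:finding an edge}, so this test is free. Unbounded edges are handled the same way. This guarantees each edge is written exactly once, as required by the write-only output model, while keeping the workspace at $O(1)$ words.

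For $\FVD(P)$ the structure is almost identical, with one new wrinkle: not every site has a nonempty farthest cell. But Lemma~\ref{le:a ray intersecting the boundary} already tells me, in $O(n)$ time and $O(1)$ workspace, whether $C^{n-1}(p_i)$ is nonempty and, if so, gives a ray meeting its boundary. So I loop over $i = 1, \dots, n$, skip $p_i$ if its farthest cell is empty, and otherwise run the same seed-and-walk procedure using the $\FVD$ variant of Lemma~\ref{le:finding an edge}, again emitting each edge only from the lower-indexed of its two incident sites. Since $\FVD(P)$ has $O(n)$ edges in total and each is found in $O(n)$ time, the total cost is again $O(n^2)$ time and $O(1)$ words.

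The main obstacle I expect is the bookkeeping of the boundary walk under the constant-workspace constraint: I must make sure the walk around $\partial C^1(p_i)$ terminates after exactly one loop and does not get stuck or cycle, without storing the list of already-seen edges of the cell. The clean way to do this is to anchor the walk at a canonically chosen starting edge of the cell — say, the first edge encountered when shooting the ray from $p_i$ in a fixed direction (e.g.\ toward $+\infty$ along a generic direction, broken by general position) — and stop as soon as the walk returns to that edge; for an unbounded cell the two unbounded edges serve as the natural start and stop. Proving that this canonical anchor is well defined and that one revolution visits each edge of the cell exactly once is routine given convexity of $C^1(p_i)$ and the general-position assumption, so I would state it briefly and move on.
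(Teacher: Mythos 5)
Your proposal is correct and follows essentially the same route as the paper: seed each cell with a ray (using Lemma~\ref{le:a ray intersecting the boundary} for $\FVD(P)$), extract one edge via Lemma~\ref{le:finding an edge}, walk around the cell boundary re-invoking that lemma at each vertex, and break ties by reporting the edge on $B(p_i,p_j)$ only from the smaller-indexed site. The one detail the paper spells out that you leave implicit is the handling of unbounded cells: since the seed edge need not be unbounded, the walk must go counterclockwise from the seed until it either returns or reaches an unbounded edge, and in the latter case restart clockwise from the seed to pick up the remaining edges.
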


\begin{proof}
First, we restate the strategy for 
$\NVD(P)$ that was proposed by Asano~\etal~\cite{amrw-cwagp-10},
and then we show how to adapt it for $\FVD(P)$.

We go through the sites in $P$.
In step $i$, we process $p_i \in P$ to detect all 
edges of $C^1(p_i)$. For this, we need a ray $\gamma$ to 
apply Lemma~\ref{le:finding an edge}.  
We choose $\gamma$ as the ray from $p_i$ 
to an arbitrary site of $P \setminus \{p_i\}$. 
This ensures that $\gamma$ intersects 
$\partial C^1(p_i)$.  Now, we use 
Lemma~\ref{le:finding an edge} to find an edge $e$ 
of $C^1(p_i)$ that intersects $\gamma$.
We consider the ray $\gamma'$ from $p_i$ through the left endpoint 
of $e$ (if it exists), and we apply Lemma~\ref{le:finding an edge} 
to find the adjacent edge $e'$ of $e$ in $C^1(p_i)$.\footnote{Note 
that the bisector that defines the left endpoint of $e$ is also the 
bisector that is spanned by $e'$. Thus, the first scan 
of the input in Lemma~\ref{le:finding an edge}, for finding 
the line spanned by $e'$, is not strictly necessary. However, 
since we must scan the input anyway to determine the endpoint of $e'$,
we chose to present the algorithm as doing two scans.
This keeps the presentation more uniform, at the expense of only 
a constant factor in the running time. The same comment also applies
to our later algorithms.}
The ray $\gamma'$ hits both $e$ and $e'$, so we perform a symbolic 
perturbation to $\gamma'$ so that only $e'$ is hit.
We repeat this procedure to find further
edges of $C^1(p_i)$, in counterclockwise direction.
This continues until we return to $e$ or 
until we find an unbounded edge of $C^1(p_i)$. 
In the latter case, we start again from the right 
endpoint of $e$ (if it exists), and we find the remaining
edges of $C^1(p_i)$ in clockwise direction. 

Since each edge of $\NVD(P)$ is incident to two Voronoi cells, 
this process will detect each edge twice. 
To avoid repetitions, whenever we find an edge $e$ of $C^1(p_i)$ 
with $e \subseteq B(p_i, p_j)$, we 
report $e$ if and only if $i<j$. 
Since $\NVD(P)$ has $O(n)$ edges, and 
reporting one edge takes $O(n)$ time and $O(1)$ words 
of workspace, the result follows.

For $\FVD(P)$, the procedure is almost the same. However, when going
through the sites in $P$, for each $p_i \in P$, we first check 
if $C^{n-1}(p_i)$ is non-empty, using 
Lemma~\ref{le:a ray intersecting the boundary}. If so, the algorithm
from the lemma also gives us a ray $\gamma$ that intersects 
$\partial C^{n-1}(p_i)$. From here, we proceed exactly as for $\NVD(P)$
to find the remaining edges of $C^{n-1}(p_i)$.
\end{proof}

\section{Obtaining a Time-Space Trade-off}\label{sec_trade}

Now we adapt the previous algorithm to a time-space trade-off.
Suppose we have $O(s)$ words of workspace at our 
disposal, for some $s \in \{1, \dots, n\}$.\footnote{The assumption 
that we have $O(s)$ words instead of exactly $s$ words of workspace
is mostly for the sake of a simple presentation. Thus, when 
describing our algorithm, we can ignore constant factors in the space
usage. The precise constant is a function that only depends on the implementation of the algorithm.}
As before, we are given a planar $n$-point set $P = \{p_1, \dots, p_n\}$ 
in general position in a read-only array, and we would like to report
all edges of $\NVD(P)$ or $\FVD(P)$ as quickly as possible.
While the algorithm from Section~\ref{sec:o1_nvd_fvd} needs two
passes over the input to find a single edge of the Voronoi diagram,
the idea now is to exploit the additional workspace in order to find $s$ edges 
of the Voronoi diagram in parallel using two passes.
For this, we first show how to find simultaneously a single edge 
for $s$ different cells of $\NVD(P)$ or of $\FVD(P)$.

\begin{lemma}\label{le:finding s edges}
Suppose we are given a set $V=\{v_1, \dots, v_s\}$
of $s$ sites in $P$, and for each $i = 1, \dots, s$, a ray 
$\gamma_i$ emanating from $v_i$ such that $\gamma_i$ intersects the 
boundary of $C^1(v_i)$.  Then, we can 
report for each $i = 1, \dots, s$, an edge $e_i$ of $C^1(v_i)$  
that intersects $\gamma_i$, in $O(n\log s)$ total time using $O(s)$ 
words of workspace.
An analogous statement holds for $\FVD(P)$.
\end{lemma}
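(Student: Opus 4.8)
The plan is to mimic the single-edge procedure of Lemma~\ref{le:finding an edge}, but run all $s$ instances concurrently, sharing the two scans over $P$ and paying only a $\log s$ overhead per site. First I would maintain, during the first scan, for each $i=1,\dots,s$ the bisector $B(v_i,\cdot)$ seen so far that intersects $\gamma_i$ closest to $v_i$; this requires $O(s)$ words and is updated in $O(\log s)$ amortized time per site $p_j$ if we organize the active rays so that, given a new site $p_j$, we can quickly locate which $\gamma_i$'s are affected. The subtlety is that a single site $p_j$ may be relevant to \emph{many} of the $s$ rays, so a naive update is $\Theta(s)$ per site and gives $O(ns)$ total, not $O(n\log s)$. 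To get the $\log s$ bound, I would instead sort the $s$ sites $v_1,\dots,v_s$ (once, in $O(s\log s)$ time) and, for each scanned $p_j$, use a geometric data structure on the $v_i$'s — for instance, since each comparison ``does $B(v_i,p_j)$ beat the current best for ray $\gamma_i$?'' reduces to a sidedness test against a line through the midpoint of $v_i p_j$, one can batch these by walking a balanced search tree keyed on the $v_i$'s; the key point is that across the whole scan the total work telescopes to $O(n\log s)$ because each of the $n$ sites contributes $O(\log s)$ amortized, the standard ``tournament'' or fractional-cascading-style argument.

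Next, once the closest bisectors $B^*_i = B(v_i, v_i^*)$ are known for all $i$, I would perform the second scan to trim each $B^*_i$ down to the actual Voronoi edge $e_i$. For each site $p_j$ encountered, and for each $i$ with $p_j \notin \{v_i, v_i^*\}$, the half of $B^*_i$ closer to $p_j$ than to $v_i$ gets cut away; since the surviving portion of each $B^*_i$ stays connected (an interval on the line $B^*_i$), we store only its two endpoints, $O(s)$ words total. Again the obstacle is that one site may shrink many intervals, so I would use the same sorted/tree structure on the $v_i$'s to localize the affected rays and charge the work so that the scan costs $O(n\log s)$ in total. After both scans, the remaining segment of each $B^*_i$ is exactly $e_i$, and we report the $s$ edges.

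The main obstacle, then, is precisely the simultaneous update: making sure that a scanned site $p_j$ affecting $k_j$ of the $s$ instances costs $O(k_j + \log s)$ or $O(\log s + \log s \cdot \text{(something telescoping)})$ rather than $O(k_j \cdot \log s)$ or $O(s)$, so that $\sum_j (\text{cost of } p_j) = O(n \log s)$. I expect this to follow from keeping the $v_i$ sorted by angle (or by coordinate) and observing that the set of $i$ whose ``current best'' is improved by $p_j$ forms $O(\log s)$ contiguous blocks in that order, or alternatively from a static range-reporting structure built once in $O(s\log s)$ time; the arithmetic of which comparison test localizes to which block is routine and I would not grind through it here. The $\FVD$ case is identical after replacing ``closest bisector'' by ``farthest bisector'' and cutting the pieces closer to $v_i$ than to the other sites, exactly as in Lemma~\ref{le:finding an edge}, with no change to the time or space bounds.
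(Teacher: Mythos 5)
Your proposal has a genuine gap, and it sits exactly at the step you declare ``routine'' and defer. You propose to scan the sites of $P$ one at a time and to maintain, for each of the $s$ rays, the best bisector seen so far, hoping that a data structure on the $v_i$'s makes each scanned site cost $O(\log s)$ amortized. With explicit per-ray maintenance this is impossible: if the sites appear in the array in (roughly) decreasing order of distance from the $v_i$'s, every scanned site improves the current best of all $s$ rays, so the total number of updates is $\Theta(ns)$, and no tournament, fractional-cascading, or telescoping argument can push the explicit work below that. The observation that the affected indices form few contiguous blocks does not help, since a single block can contain $\Theta(s)$ indices, each of whose stored value ($B(v_i,p_j)$, which is different for every $i$) must change. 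The same objection applies to your second scan: one site can shrink $\Theta(s)$ of the intervals, and the total number of shrink events can again be $\Theta(ns)$.

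The missing idea is to process $P$ in \emph{batches} of $s$ sites rather than site by site. For each batch $Q_j$, the paper builds the full diagram $\NVD(V\cup Q_j)$ from scratch in $O(s\log s)$ time; one traversal of this $O(s)$-size diagram then yields, for \emph{all} rays $\gamma_i$ simultaneously, the best candidate bisector contributed by the entire batch, in $O(s)$ additional time. Each ray's stored value is therefore touched only $O(n/s)$ times (once per batch), and the total cost is $O((n/s)\cdot s\log s)=O(n\log s)$. The second scan is handled the same way, trimming each $e_i$ against the cell of $v_i$ in $\NVD(V\cup Q_j)$ once per batch. Your overall two-phase structure (first determine $B_i^*$, then trim it to $e_i$) matches the paper, but without the batching device the claimed $O(n\log s)$ bound does not follow.
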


\begin{proof}
The algorithm has two phases. In the first phase, for 
$i = 1, \dots, s$, we find the bisector $B^*_i$ that contains $e_i$, and 
in the second phase, for $i = 1, \dots, s$, we find $e_i$, i.e., the 
portion of $B^*_i$ that is in $\NVD(P)$.

The first phase proceeds as follows:
we group $P$ into \emph{batches} $Q_1, Q_2, \dots, Q_{n/s}$ of $s$
consecutive sites (according to the order in the input array).
First, we compute $\NVD(V \cup Q_1)$. Since $|V \cup Q_1|  \leq 2s$, 
this takes $O(s\log s)$ time using $O(s)$ words of workspace. 
Now, for $i = 1, \dots, s$,
we find the edge $e_i'$ of $\NVD(V \cup Q_1)$ that intersects 
$\gamma_i$ closest to $v_i$, and we 
store the bisector $B_i'$ that contains $e_i'$. 
This can be done in total time $O(|V \cup Q_1|)$, since each ray
originates in a unique Voronoi cell and since we can simply
traverse the whole diagram $\NVD(V \cup Q_1)$ to find the 
intersection points.
Then, for $j = 2, \dots, n/s$,
we again compute 
$\NVD(V \cup Q_j)$. For $i = 1, \dots, s$, we find the edge 
in $\NVD(V \cup Q_j)$ that intersects $\gamma_i$ closest to $v_i$,
in total time $O(|V \cup Q_j|)$. We update
$B_i'$ to the bisector that contains this edge if and only if its intersection
with $\gamma_i$ is closer to $v_i$ than for the current $B_i'$. 
We claim that after all batches $Q_1, \dots, Q_{n/s}$ have been scanned, 
$B_i'$ is the desired bisector $B_i^*$.
To see this, let $B_i^* = B(v_i, p)$, for a site
$p \in P \setminus\{v_i\}$. Then, for the batch
$Q_j$ with $p \in Q_j$, the Voronoi diagram $\NVD(V \cup Q_j)$
contains an edge on $B_i^*$. Furthermore, by definition,
no other bisector intersects $\gamma_i$ closer to $v_i$ than $B_i^*$.

In the second phase, we again group $P$ into batches
$Q_1, \dots, Q_{n/s}$ of size $s$. 
We again compute $\NVD(V \cup Q_1)$. 
For $i = 1, \dots, s$, we find the portion 
of $B_i^*$ inside the cell of $v_i$ in $\NVD(V \cup Q_1)$, 
and we store it in $e_i$. Then, for $j= 2, \dots, n/s$, 
we compute $\NVD(V \cup Q_j)$, and for $i = 1, \dots, s$, 
we update the endpoints of $e_i$ to the 
intersection of the current $e_i$ and the cell of $v_i$ 
in $\NVD(V \cup Q_j)$. After processing $Q_j$, 
there is no site in $V \cup \bigcup_{m=1}^{j} Q_m$  that is closer to $e_i$ than $v_i$.
Thus, at the end of the second phase, $e_i$ 
is the edge of $C^1(v_i)$ that intersects $\gamma_i$. 
Due to the properties of the Voronoi diagram, throughout the 
algorithm, $e_i$ is a connected subset of
$B_i^*$ (i.e., a ray or a line segment), and it can be described 
with $O(1)$ words of workspace. 

In total, we construct $O(n/s)$ Voronoi diagrams, each with at most
$2s$ sites. Since we have $O(s)$ words of workspace available, it takes
$O(s \log s)$ time to compute a single 
Voronoi diagram. 
Thus, the total running time is $O(n\log s)$. At each point in time, we 
have $O(s)$ sites in workspace and a constant amount of information for each site, including the Voronoi diagram of these sites,
so the space bound is not exceeded.
The proof for $\FVD(P)$ is analogous.
\end{proof}

Now we describe our time-space trade-off algorithm. At each point in time,
we have a set $V$ of $s$ sites in workspace. We use 
Lemma~\ref{le:finding s edges} to produce a new edge for each site
in $V$. Once all edges for a site $v \in V$ have been found, we discard
$v$ from $V$ and replace it with a new site from $P$ (we say that $v$ has been processed completely).
We stop this process as soon as all but fewer than $s$ sites have been processed completely. At this point,
we do not use Lemma~\ref{le:finding s edges} any longer. This is because 
Lemma~\ref{le:finding s edges} needs two passes of the input to find 
a single new edge for each site in $V$. Thus, if there is a cell with
many edges, 
too many passes will be necessary.
To avoid this, we will need a different method for finding the edges
of the remaining cells, see below.
We call these remaining cells \emph{big}, and the other
cells  \emph{small}.
By definition, all small cells have $O(n/s)$ edges, 
but big cells may have a lot more edges (even though this does not have
to be the case). 

In order to avoid doubly reporting edges, our algorithm is split into 
three \emph{phases}. In the first phase, we 
process the whole input to identify the big cells (no edge is reported 
in this phase). The second phase scans the input again 
and reports all edges incident to at least one small cell.
The third phase reports edges incident to 
two big cells.
\paragraph{First phase.}
The aim of this phase is to find the big cells. We 
describe how we use Lemma~\ref{le:finding s edges} in more detail. 
We scan all sites with non-empty Voronoi cells. For $\NVD(P)$,
since all sites have a non-empty cell, 
we can scan them sequentially. The starting ray is constructed in 
the same way as in Theorem~\ref{T:constant workspace algo}. 
For $\FVD(P)$, by Fact~\ref{fa:ray between two unbounded edges}, we 
need to find the sites on the convex hull of $P$.  For this, we use
the algorithm of Darwish and Elmasry~\cite{de-otst2dchp-14} that 
reports the sites on the convex hull of $P$ in clockwise order 
in $O(\frac{n^2}{s\log n} + n\log s)$ time using $O(s)$ words of workspace.  
We run the Darwish-Elmasry algorithm until $s$ sites on the convex hull 
have been identified. Then, we suspend the convex 
hull computation and process those sites. Whenever more sites are 
needed, we simply resume the convex hull algorithm. 
Since the convex hull is reported in clockwise order, we know the 
two neighbors for each site on the convex hull and we can find 
a starting ray using Fact~\ref{fa:ray between two unbounded edges} 

At each point in time, our Voronoi algorithm has
$s$ sites from $P$ with non-empty cells in memory. 
We apply Lemma~\ref{le:finding s edges} to 
compute one edge on the cell of each such site. After that, we 
iteratively update the rays of all sites in memory to find the next 
edge of each cell, as in Theorem~\ref{T:constant workspace algo}. 
Whenever all edges of a cell have been found, we remove the 
corresponding site from memory, and we replace it with the next 
relevant site; see Figure~\ref{fig:3}. Since $(1)$ the Voronoi diagram of $P$
has $O(n)$ edges, $(2)$ in each iteration we produce $s$ edges,
and $(3)$ each edge is produced at most twice, it follows that after $O(n/s)$ iterations, 
fewer than $s$ sites remain in memory. All other sites of $P$ must have 
been processed. 

Thus, after the first phase, we have identified all 
big cells (those that have not been processed fully). Since there are
at most $s$ of them, we can store the corresponding sites explicitly 
in a table $\mathcal{B}$. We sort those sites according to their indices,
so that membership in $\mathcal{B}$ can be tested in $O(\log s)$ time. 

\begin{figure}
  \centering
    \includegraphics{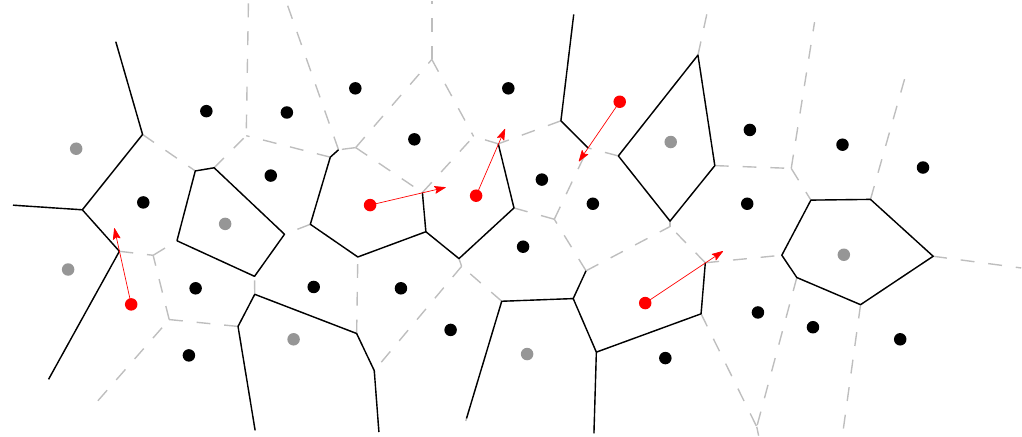}
\caption{The state of the algorithm at the end of iteration 9 of 
  applying Lemma~\ref{le:finding s edges}, for a set $P$ of $35$ 
  sites and workspace of size $O(s)= O(\lfloor\log n\rfloor)$. 
  The black segments are the edges of $\NVD(P)$ that have already 
  been found. The gray and the red sites represent, respectively, 
  the sites which have been fully processed and those which are currently 
  in the workspace.}
\label{fig:3}
\end{figure}
\paragraph{Second phase.}
The second phase is very similar to the first 
one.\footnote{Indeed, these two phases could be merged into one. 
However, as we will see below, it is not straightforward to do so  for higher-order 
Voronoi diagrams. Thus, for consistency, we split the 
two phases even for $k = 1$ and $k = n-1$.} Pick $s$ sites to process; 
repeatedly use Lemma~\ref{le:finding s edges} to find edges 
for each site; 
once all edges of a site $v$ have been found, replace $v$ with the
next site; continue until only big cells remain. The main difference 
now is  we report some Voronoi edges (making sure 
that every edge is reported exactly once). More precisely,  
suppose that we discover a Voronoi edge $e$ while 
scanning the cell $C_i$ of a site $v_i$, and that $e$ is also 
incident to the cell $C_j$ of the site $v_j$. 
Then, we report $e$ only if 
one of the following conditions holds:
\setlist{nolistsep}
\begin{enumerate}\renewcommand{\labelenumi}{(\roman{enumi})}
\item both $C_i$ and $C_j$ are small and $i < j$; or 
\item $C_i$ is small and $C_j$ is big.
\end{enumerate}
\paragraph{Third phase.}
The purpose of the third phase is to report every 
Voronoi edge that is incident to two big cells. 
For this, we compute the Voronoi diagram of 
the sites of big cells, in $O(s \log s)$ time.
Let $E_\mathcal{B}$ denote the set of its edges. The edges of $E_\mathcal{B}$ that are 
also present in the Voronoi diagram of $P$ need to be reported 
(the edges may need to be truncated).

In order to determine which edges of $E_\mathcal{B}$ remain in the diagram, 
we proceed similarly as in the second scan of 
Lemma~\ref{le:finding s edges}: in each step, we compute the 
Voronoi diagram $\mathcal{V}$ of $\mathcal{B}$ and a batch of $s$ sites from 
$P$. For each edge $e$ of $E_\mathcal{B}$, we check whether $e$ is 
cut off in $\mathcal{V}$. If so, we 
update the endpoints of $e$ to the intersection of $e$ and 
the cell for one of the sites defining $e$. After all edges have been checked, 
we continue with the next batch of $s$ sites from $P$. After 
processing all the sites of $P$, the remaining $O(s)$ edges 
in $E_\mathcal{B}$ that have not become empty constitute all 
the edges of the Voronoi diagram of $P$
that are incident to two big cells.
In contrast to Lemma~\ref{le:finding s edges}, 
we report $O(s)$ edges that are not necessarily incident to $s$ 
different cells. 

\begin{theorem}\label{T:nvd_fvd_ts}
Let $P = \{p_1, \dots, p_n\}$ be a 
planar $n$-point set in general position 
stored in a read-only array.
Let $s$ be a parameter in $\{ 1, \dots, n \}$.
We can report 
all  edges of $\NVD(P)$ in 
$O((n^2/s) \log s)$ time using $O(s)$ words 
of workspace.
An analogous result holds for $\FVD(P)$. 
\end{theorem}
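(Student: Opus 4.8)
The plan is to implement the three-phase scheme described above, using Lemma~\ref{le:finding s edges} as the workhorse and the boundary-traversal strategy of Theorem~\ref{T:constant workspace algo} to walk around the boundaries of $s$ cells in parallel. Throughout, the workspace holds a set $V$ of at most $s$ sites of $P$, each equipped with $O(1)$ words of auxiliary data: the current ray $\gamma_v$ emanating from $v$, the partially found edge, and a constant-size flag recording the traversal state (the starting edge, the current direction, and whether an unbounded edge has already been seen). One \emph{round} consists of a single call to Lemma~\ref{le:finding s edges} on $V$ with the current rays, which in $O(n\log s)$ time returns one new edge per cell; afterwards each ray is advanced to the next position exactly as in Theorem~\ref{T:constant workspace algo} (through an endpoint of the edge just found, symbolically perturbed), a completed cell is detected, its site is evicted from $V$, and the next site to be processed takes its place.

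In the first phase we run this loop purely to locate the \emph{big} cells, reporting nothing, and stop once fewer than $s$ sites remain unprocessed. The counting argument is that $\NVD(P)$ (resp.\ $\FVD(P)$) has $O(n)$ edges and $\sum_{\text{cells }C}|\partial C| = O(n)$; since every full round discovers $s$ hitherto unseen cell-edges along these boundaries, there are only $O(n/s)$ rounds, for a total of $O((n^2/s)\log s)$ time and $O(s)$ space. The sites still in $V$ at the end are precisely the big cells; there are fewer than $s$ of them, so we store them in a sorted table $\mathcal{B}$ allowing $O(\log s)$-time membership tests. For $\FVD(P)$ we also need, for each hull site, its two hull neighbours in order to construct a starting ray via Fact~\ref{fa:ray between two unbounded edges}; we obtain the hull vertices in clockwise order by running the trade-off convex-hull algorithm of Darwish and Elmasry~\cite{de-otst2dchp-14} in a streaming fashion, pulling $s$ vertices at a time and resuming as needed, whose total cost $O(\frac{n^2}{s\log n}+n\log s)$ is dominated by the bound above.

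The second phase repeats the same loop but now reports an edge $e$ incident to the cells $C_i$ of $v_i$ and $C_j$ of $v_j$ precisely when (i) both $C_i,C_j$ are small and $i<j$, or (ii) $C_i$ is small and $C_j$ is big; a membership query in $\mathcal{B}$ costs $O(\log s)$ and does not change the asymptotics, so this phase again runs in $O((n^2/s)\log s)$ time. The third phase handles edges between two big cells: we build the Voronoi diagram of the sites in $\mathcal{B}$ in $O(s\log s)$ time, let $E_{\mathcal{B}}$ be its edge set, and then, mimicking the second scan of Lemma~\ref{le:finding s edges}, sweep through $P$ in $n/s$ batches of size $s$, each time building the Voronoi diagram of $\mathcal{B}$ together with the current batch and truncating every $e\in E_{\mathcal{B}}$ against the cell of one of its two defining sites. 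This takes $O((n^2/s)\log s)$ time and $O(s)$ space, and the surviving edges of $E_{\mathcal{B}}$ are exactly the edges incident to two big cells; we report them. Summing the three phases gives the claimed bound, and the $\FVD$ case is identical modulo the hull interface.

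For correctness I would check that every edge falls into exactly one reporting case — both incident cells small (emitted once in phase~2 by rule~(i) with the $i<j$ tie-break), one small and one big (emitted once in phase~2 by rule~(ii), from the small side, which is guaranteed to be traversed to completion), or both big (emitted once in phase~3) — and that the traversal loop inherited from Theorem~\ref{T:constant workspace algo} visits every edge of every cell it processes. The main obstacle is not the time bound, which follows from Lemma~\ref{le:finding s edges} and the $O(n/s)$-round count, but the bookkeeping that forces each edge to be emitted exactly once: one must ensure that an edge touching a big cell is never lost because that big cell is evicted before being finished (hence phase~2 reports it from the small side, and edges between two big cells are deferred to phase~3), and that the truncations in phases~2 and~3 never produce a spurious or duplicated edge when a ray grazes a Voronoi vertex, which is handled by the same symbolic perturbation used throughout.
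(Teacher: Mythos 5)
Your proposal is correct and follows essentially the same route as the paper: the same three-phase scheme (identify big cells via parallel boundary traversal with Lemma~\ref{le:finding s edges}, report edges incident to a small cell with the same tie-breaking rules (i)--(ii), then handle big--big edges by truncating $\NVD(\mathcal{B})$ against batches), the same $O(n/s)$-round counting argument, and the same use of the Darwish--Elmasry algorithm for the $\FVD$ starting rays. The only cosmetic difference is that you bound the third phase by $O((n^2/s)\log s)$ where a single scan actually gives $O(n\log s)$, but this does not affect the stated result.
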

\begin{proof}
Lemma~\ref{le:finding s edges} guarantees that the edges reported 
in the second phase are part of $\NVD(P)$. Also, conditions (i) and 
(ii) ensure that no edge is reported twice. 
Clearly, if an 
edge $e \in \NVD(P)$ is incident to two big cells, the same edge 
(possibly a superset) must be present in $\NVD(\mathcal{B})$. 
For the reverse inclusion, first note that since $\mathcal{B} \subset P$,
an edge incident to two big cells that is not present in $\NVD(\mathcal{B})$ 
cannot be present in $\NVD(P)$. Furthermore, for each edge $e$ of $\NVD(\mathcal{B})$, 
we consider all sites of $P$ and we remove only the 
portions of $e$ that cannot be present in $\NVD(P)$.

Finally, we need to analyze the running time. 
The most expensive part of the algorithm lies in the $O(n/s)$ 
invocations of Lemma~\ref{le:finding s edges} during the first and 
the second phase. Other than that, creating the table $\mathcal{B}$ needs 
$O(s\log s)$ time, and we perform $O(n)$ lookups in $\mathcal{B}$, two for 
each edge of $\NVD(P)$. Each lookup needs $O(\log s)$ time, 
so $O(n\log s)$ time in total. The third phase does a single scan over 
the input, and it computes a Voronoi diagram for each batch of 
$s$ sites, which totally takes $O(n\log s)$ 
time. Thus, the running time of the algorithm
is $O((n^2/s)\log s)$.

At each point during the algorithm, we store 
only $s$ sites that 
are currently being processed (along with a constant amount of  
information attached to each such site), the table $\mathcal{B}$ of at most 
$s$ sites, the batch of $s$ sites being processed (and the 
associated Voronoi diagram). 
All of this can be stored using $O(s)$ words of workspace, as 
claimed. 

For $\FVD(P)$, the approach is analogous. The only difference is 
that now we must also find the convex hull of
$P$. With the algorithm of Darwish and Elmasry~\cite{de-otst2dchp-14},
this takes $O((n^2/s) \log s)$ time for $O(s)$ words of workspace,  
so the asymptotic running time does not increase.
\end{proof}

\section{Higher-Order Voronoi Diagrams}\label{sec_high}

\begin{figure}[tb]
  \centering
  \subcaptionbox{\label{fig:4a}}{\includegraphics[page=1]{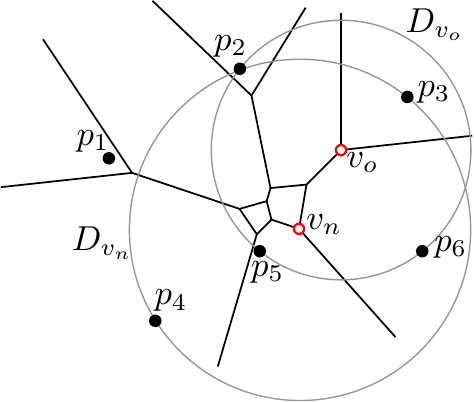}}\hspace{0.5in}
  \subcaptionbox{\label{fig:4b}}{\includegraphics[page=2]{4}}
\caption{The diagram $\VD^k(P)$ for $k =3 $ and $P =\{p_1,\dots, p_6\}$. 
  (a) The interior of the disk $D_{v_n}$ with center $v_n$ 
  contains $k-1$ sites $\{p_5, p_6\}$, so the $k$-vertex $v_n$ is 
  new. The interior of the disk $D_{v_o}$ with center $v_o$ contains 
  $k-2$ sites $\{p_3\}$, so the $k$-vertex $v_o$ is old. 
  (b) The $k$-cell $C_{4,5,6}$ is the cell of $\{p_4, p_5, p_6\}$. 
  The $k$-edge $e_1$  is represented by the set $\{p_5,p_6\}$
  containing the $k-1$ sites closest to $e_1$, the two sites $p_3$ 
  and $p_4$ that are equidistant to $e_1$, and the site $p_2$ 
  that defines the $k$-vertex $v_n$. Since $v_n$ is a new $k$-vertex, 
  the site $p_2$ is not among the $k-1$ closest sites to $e_1$. 
  The $k$-edge $e_2$ of the $k$-cell $C_{2,3,6}$ for $\{p_2,p_3,p_6\}$ 
  is represented by the set $\{p_2,p_3\}$ of $k-1$ sites closest to $e_2$, 
  the two sites $p_5$ and $p_6$ that are equidistant to $e_2$, and the 
  site $p_2$ that defines the $k$-vertex $v_o$. Since $v_o$ is an old $k$-vertex, 
  the site $p_2$ is among the $k-1$ closest sites to $e_2$.}
\end{figure}

We now consider computing higher-order Voronoi diagrams~\cite{Lee82}. 
More precisely, we are given an integer $K \in O(\sqrt{s})$,
and we would like to report the family of all higher-order Voronoi 
diagrams of order $k=1, \dots, K$, 
where we have $O(s)$ words of workspace at our 
disposal, for some $s \in \{1, \dots, n\}$.
For this, we generalize our approach from the previous 
section, and we combine it with a recursive procedure:
for $k = 1, \dots, K-1$, we compute the edges of $\VD^{k+1}(P)$ by using 
previously computed edges of $\VD^{k}(P)$. To make efficient
use of the available memory, we perform the computation of 
the diagrams $\VD^1(P), \VD^2(P), \dots, \VD^K(P)$ in a pipelined 
fashion, so that in each stage, the necessary edges of the previous 
Voronoi diagrams are at our disposal
and the total memory usage remains $O(s)$.

We begin with some more background on higher-order Voronoi diagrams.
Let $x \in \R^2$ be a point in the plane.
The \emph{distance order} for $x$ is the sequence of sites in $P$
ordered according to their distance from $x$, from closest to 
farthest. By our general position assumption, there are at most 
three sites in $P$ with the same distance to $x$.
We call a cell $C$ of $\VD^k(P)$ a \emph{$k$-cell}, and
we represent it as the set of $k$ sites that are closest
to all points in $C$. Similarly, we call a vertex $v$
of $\VD^k(P)$ a \emph{$k$-vertex}. It is known that there exists 
a disk $D_v$ with center $v$ such that 
$|\partial D_v \cap P| = 3$ and 
$|\mathring{D}_v \cap P| \in \{k-2, k-1\}$, where 
$\partial D_v$ is the boundary and $\mathring{D}_v$ is the interior 
of $D_v$. We call $v$ an \emph{old} vertex if 
$|\mathring{D}_v \cap P| = k-2$, and a \emph{new} vertex if
$|\mathring{D}_v \cap P| = k-1$; see Figure~\ref{fig:4a}.
We represent $v$ by the set $D_v \cap P$, marking the 
sites on $\partial D_v$.
Finally, the edges of $\VD^k(P)$
are called \emph{$k$-edges}. We represent them in a
somewhat unusual manner: each edge 
of $\VD^k(P)$ is split into two directed \emph{half-edges},
such that the half-edges are oriented in opposing directions 
and such that each half-edge is \emph{associated} with the $k$-cell to its
left. A half-edge $e$ is represented by $k+3$ sites of $P$:
the $k-1$ sites closest to $e$, the two sites that come next
in the distance order for the points on $e$ and are equidistant to 
$e$, and one more 
site for each endpoint of $e$, to define the corresponding
$k$-vertices.
For each endpoint $v$ of $e$, there are two cases: if $v$ is an old 
vertex, the third site defining $v$ is among the $k-1$ sites closest 
to $e$, and if $v$ is a new vertex, the third site is not among those $k-1$ sites; see Figure~\ref{fig:4b}.
The order of the endpoints encodes the direction of the half-edge.
The half-edge is directed from the \emph{tail} vertex to
the \emph{head} vertex.

\begin{figure}[tb]
  \centering
  \subcaptionbox{\label{fig:5a}}{\includegraphics[page=1]{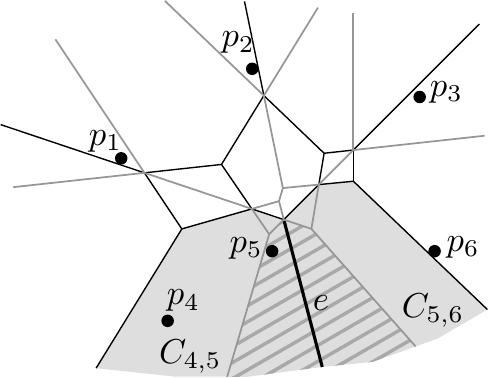}}\hspace{0.5in}
  \subcaptionbox{\label{fig:5b}}{\includegraphics[page=2]{5}}
\caption{The diagrams $\VD^k(P)$ (black) and $\VD^{k+1}(P)$ (gray), 
  for $k = 2$ and $P = \{p_1,\dots, p_6\}$. (a) The $k$-cells 
  $C_{4,5}=C^k(\{p_4, p_5\})$ and $C_{5,6}=C^k(\{p_5, p_6\})$ share 
  the $k$-edge $e$. The set $Q= \{p_4, p_5\}\cup \{p_5, p_6\} = \{p_4, p_5,
  p_6\}$ gives a non-empty $(k+1)$-cell (shown hashed) which contains $e$. 
  (b) The $(k+1)$-cell $C_{2,3,5}=C^{k+1}(\{p_2, p_3, p_5\})$ is shown in gray. 
  Inside $C_{2,3,5}$, the edges of $\VD^k(P)$ are identical to the 
  edges of $\FVD(\{p_2, p_3, p_5\})$. These edges meet the 
  boundary of $C_{2,3,5}$ only in the vertices of $C_{2,3,5}$.}
\end{figure}

\noindent
We will need several well-known properties of higher-order 
Voronoi diagrams~\cite{Lee82}:
\renewcommand{\labelenumi}{(\Roman{enumi})}
\begin{enumerate}
    \itemsep0em
	\item let $Q_1, Q_2 \subset P$ be two $k$-subsets such that
the $k$-cells $C^{k}(Q_1)$ and $C^{k}(Q_2)$ are 
non-empty and adjacent (i.e., share a $k$-edge $e$). 
Then, the set $Q = Q_1 \cup Q_2$ has size $k+1$, and $C^{k+1}(Q)$ 
is a non-empty $(k+1)$-cell; see Figure~\ref{fig:5a}.
	\item Let $Q \subset P$ be a $(k+1)$-subset with $C^{k+1}(Q)$ 
non-empty. Then, the part of  $\VD^{k}(P)$ restricted to 
$C^{k+1}(Q)$ is identical to (i.e., has the same vertices and 
edges as) the part of $\FVD(Q)$ restricted to $C^{k+1}(Q)$.
Furthermore, the edges of $\FVD(Q)$ in $C^{k+1}(Q)$ do not intersect 
the boundary, but their endpoints either lie in the interior of 
$C^{k+1}(Q)$ or coincide with vertices of $C^{k+1}(Q)$.
Hence, for every $(k+1)$-cell $C$, the number of $k$-edges 
in $C$ lies between $1$ and $O(k+1)$, and 
these edges form a tree; see Figure~\ref{fig:5b}.
	\item If $v$ is an old $k$-vertex, then it is also a new $(k-1)$-vertex,
and if $v$ is a new $k$-vertex, then it is also an old $(k+1)$-vertex.  
In particular, every vertex appears in exactly two Voronoi 
diagrams of consecutive order; see Figure~\ref{fig:6}. Note that all $1$-vertices are 
new, and all $(n-1)$-vertices are old.
\end{enumerate}

\begin{figure}[tb]
  \centering
  \subcaptionbox{\label{fig:6a}}{\includegraphics[page=1]{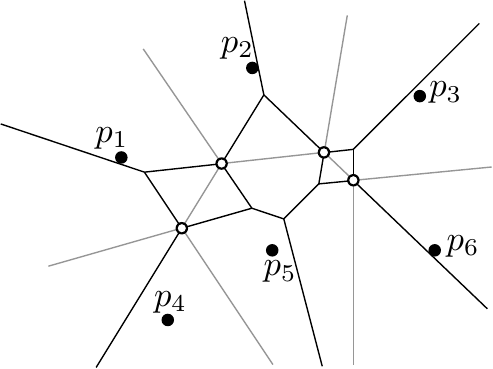}}\hspace{0.5in}
  \subcaptionbox{\label{fig:6b}}{\includegraphics[page=2]{6}}
\caption{The diagram $\VD^k(P)$ (black) for $k = 2$ and $P=\{p_1,\dots, p_6\}$. 
  (a) The diagram $\VD^{k-1}(P)$ is shown in gray. The empty  vertices of 
  $\VD^k(P)$ are old $k$-vertices, and they also appear in $\VD^{k-1}(P)$ 
  as new $(k-1)$-vertices. (b) The diagram $\VD^{k+1}(P)$ is shown in 
  gray. The empty vertices of $\VD^k(P)$ are new $k$-vertices, 
  and they also appear in $\VD^{k+1}(P)$ as old $(k+1)$-vertices. 
  Every vertex of $\VD^k(P)$ appears in exactly one of $\VD^{k-1}(P)$ or $\VD^{k+1}(P)$.}
\label{fig:6}
\end{figure}

Next, we describe a procedure to generate all (directed)
$(k+1)$-half-edges, assuming that we have all (directed) $k$-half-edges
at hand. Later, we will 
combine these procedures, for $k = 1, \dots, K$,
in a space-efficient manner.
Our high-level idea is as follows: let $e$ be a
$k$-half-edge. By property (II), the $k$-half-edge $e$ lies
inside a $(k+1)$-cell $C$. We will see that 
we can use $e$ as a starting ray to report all half-edges incident 
to $C$, similar to Lemma~\ref{le:finding s edges}.
However, if we repeat this procedure for every $k$-half-edge, we 
may report a $(k+1)$-half-edge $\Omega(k)$ times. 
This will lead to problems when we combine the procedures
for computing the Voronoi diagrams of different orders.
To avoid this, we do the following: we call a $k$-half-edge
\emph{relevant} if its head vertex lies
on the  boundary of the
$(k+1)$-cell $C$ that contains it.
For each $(k+1)$-cell $C$, we partition the boundary
of $C$ into \emph{intervals} of $(k+1)$-half-edges 
between two consecutive head vertices of relevant $k$-half-edges that 
lie inside $C$. 
We assign each such interval to
the relevant $k$-half-edge of its clockwise endpoint; see Figures~\ref{fig:7a} and~\ref{fig:7b}. 

\begin{figure}[tb]
  \centering
  \subcaptionbox{\label{fig:7a}}{\includegraphics[page=1]{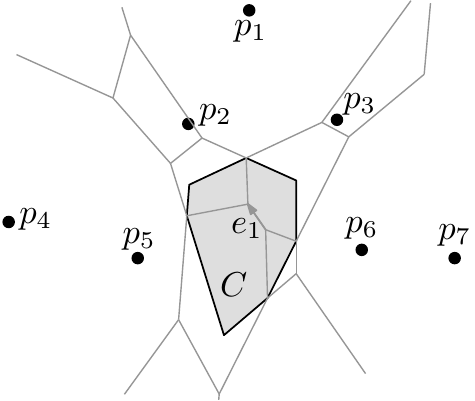}}\hspace{0.55in}
  \subcaptionbox{\label{fig:7b}}{\includegraphics[page=2]{7}}\hspace{0.7in}
  \subcaptionbox{\label{fig:7c}}{\includegraphics[page=3]{7}}
\caption{(a) The diagram $\VD^k(P)$ (gray) for $k = 3$ and 
  $P = \{p_1, \dots, p_7\}$. The $k$-half-edge $e_1$ lies in the 
  $(k+1)$-cell $C=C^{k+1}(\{p_2, p_3, p_5, p_6\})$. The head vertex 
  of $e_1$ is not on $\partial C$, thus $e_1$ is not a relevant 
  $k$-half-edge. The opposite direction of $e_1$ is also not relevant. 
  (b) The $k$-half-edges $e_2, e_3, e_4, e_5$ are relevant, 
  since their head vertices lie on $\partial C$. 
  The interval of $(k+1)$-half-edges on $C$ assigned to each of 
  these $k$-half-edges is shown. In this example, the opposite direction 
  of none of $e_2, e_3, e_4, e_5$ is relevant. (c) The $(k+1)$-half-edge 
  $f_2$ is incident to the head vertex of $e_2$ and lies to the left 
  of the directed line spanned by $e_2$. Among all such edges, $f_2$ 
  makes the smallest angle $\theta$ with $e_2$.}
\label{fig:7}
\end{figure}

Now, our algorithm goes through all $k$-half-edges. If the
current $k$-half-edge $e$ is not relevant, the  algorithm does nothing. 
Otherwise, it reports the $(k+1)$-half-edges of the interval 
assigned to $e$.
This ensures that every half-edge is reported exactly once.
As in the previous section, we distinguish between
\emph{big} and \emph{small} cells in $\VD^{k+1}(P)$,
lest we spend too much time on cells with
many incident edges. A more detailed description follows below.

The following lemma describes
an algorithm that takes
$s$ different $k$-half-edges. For each such $k$-half-edge $e$, 
the algorithm either determines that $e$ is not relevant 
or finds the first edge of the interval of $(k+1)$-half-edges 
assigned to $e$.

\begin{lemma}\label{le:s edges of k+1 using k}
Suppose we are given $s$ different $k$-half-edges $e^k_1, \dots, e_s^k$ 
represented by the subsets $E_1, \dots, E_s$ of $P$.
There is an algorithm that, for $i = 1, \dots, s$,
either determines that $e_i^k$ is not relevant,
or finds ${e}^{k+1}_i$, the first $(k+1)$-edge of 
the interval assigned to $e_i^k$.
The algorithm takes 
total expected time  
$O\big(n\log s + nk \, 2^{O(\log^* k)}\big)$ 
or total deterministic time  
$O(n\log s + nk \log k)$ and 
uses $O(sk^2)$ words of workspace.
\end{lemma}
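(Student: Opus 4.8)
The plan is to mimic the two-phase structure of Lemma~\ref{le:finding s edges}, but now working at the level of $(k+1)$-cells rather than $1$-cells. First I would observe that by property~(II), each $k$-half-edge $e_i^k$ lies inside a unique $(k+1)$-cell $C_i$, whose defining set is $Q_i := E_i^{(k-1)} \cup \{$the two equidistant sites of $e_i^k\}$, where $E_i^{(k-1)}$ are the $k-1$ sites closest to $e_i^k$ (this follows from property~(I) applied to the two $k$-cells sharing $e_i^k$). So in a preliminary step, using only the representation $E_i$, I can write down each $Q_i$ explicitly; this is $O(k)$ work per half-edge and no scanning of $P$ is needed.

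Next comes the batched-scan part. As in Lemma~\ref{le:finding s edges}, I would group $P$ into $n/s$ batches $Q_1',\dots,Q_{n/s}'$ of $s$ consecutive sites each. For batch $j$, I form the site set $S_j$ consisting of all sites appearing in the sets $Q_1,\dots,Q_s$ together with the $s$ sites of the batch; since each $Q_i$ has $k+1$ sites, $|S_j| = O(sk)$, and I compute $\VD^{k+1}(S_j)$. Here I invoke the known single-diagram algorithm: computing the order-$(k+1)$ Voronoi diagram of $m = O(sk)$ points takes $O(m\log m + m k\,2^{O(\log^* k)})$ expected time (Ramos~\cite{Ramos99}) or $O(m\log m + mk\log k)$ deterministic time (Chan--Tsakalidis~\cite{ChanTs16}), using $O(mk) = O(sk^2)$ words; this is exactly where the $O(sk^2)$ space bound and the $nk\,2^{O(\log^* k)}$ / $nk\log k$ time terms come from, since we do this $n/s$ times and $(n/s)\cdot O(sk\log(sk)) = O(nk\log(sk)) = O(n(\log s + k\log k) \cdot \text{poly})$, folding the $\log k$ into the stated bounds. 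Within each $\VD^{k+1}(S_j)$ I walk from $e_i^k$ (using it as a starting ray inside $C_i$, exactly as the lemma statement describes) to locate, for each $i$, the first $(k+1)$-edge of the interval assigned to $e_i^k$; a first scan over the batches pins down the bisector $B_i^*$ supporting that edge (the farthest bisector, since by property~(II) the relevant structure inside $C_i$ looks like $\FVD(Q_i)$), and a second scan over the batches trims $B_i^*$ down to the actual $(k+1)$-edge of $\VD^{k+1}(P)$, just as in the second phase of Lemma~\ref{le:finding s edges}. Relevance of $e_i^k$ is decided along the way: $e_i^k$ is relevant iff its head vertex lies on $\partial C_i$, which I can detect once I know $C_i$ exactly (i.e.\ once the walk inside $\VD^{k+1}(S_j)$ has converged), by checking whether the head vertex is a $(k+1)$-vertex, equivalently whether it coincides with a vertex of the trimmed $\FVD(Q_i)$ structure.

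The one subtlety beyond Lemma~\ref{le:finding s edges} is identifying, for a relevant $e_i^k$, the \emph{first} edge of its assigned interval in the clockwise ordering around $\partial C_i$ — this is where I would be most careful. Per Figure~\ref{fig:7c}, among all $(k+1)$-half-edges incident to the head vertex $v$ of $e_i^k$ and lying to the left of the directed line spanned by $e_i^k$, I want the one making the smallest angle with $e_i^k$; since $v$ is a $(k+1)$-vertex of degree three, there are at most two candidates, and I can pick the correct one with $O(1)$ geometric comparisons once $v$ and its incident $(k+1)$-edges have been computed (they are among the edges of $\VD^{k+1}(S_j)$ for the batch that resolves $v$, and trimmed against all of $P$ by the second scan). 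The main obstacle, then, is bookkeeping rather than mathematics: ensuring the $(k+1)$-edge I output is correctly oriented as a half-edge (tail-to-head) and correctly annotated with its $k+3$ defining sites — the $k$ sites common to the two $(k+1)$-cells across it, the two next-in-order equidistant sites, and one site per endpoint distinguishing old from new $(k+1)$-vertices — all of which is recoverable in $O(k)$ time per edge from the local picture in $\VD^{k+1}(S_j)$ after the trimming scan, so it does not affect the asymptotics. Assembling these pieces: the preliminary step is $O(sk)$, the two batched scans cost $(n/s)$ diagram computations of $O(sk)$ points each, giving the claimed $O(n\log s + nk\,2^{O(\log^* k)})$ expected or $O(n\log s + nk\log k)$ deterministic total time, and peak space $O(sk^2)$. \qed
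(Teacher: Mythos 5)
Your overall strategy is the same as the paper's: construct order-$(k+1)$ Voronoi diagrams of the defining sites of the given half-edges together with successive batches of $P$, locate the head vertex of each relevant $e_i^k$ in these diagrams, select the incident $(k+1)$-edge by the left-of/smallest-angle criterion of Figure~\ref{fig:7c}, and trim it with a second scan. However, there is a genuine quantitative gap in your time analysis: you partition $P$ into $n/s$ batches of $s$ sites, so each of the $n/s$ iterations builds an order-$(k+1)$ diagram on $|S_j| = O(sk)$ sites at cost $O\bigl(sk\log(sk) + sk^2\,2^{O(\log^* k)}\bigr)$ expected (or $O(sk\log(sk)+sk^2\log k)$ deterministic). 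Summing gives $O\bigl(nk\log(sk) + nk^2\,2^{O(\log^* k)}\bigr)$, which exceeds the claimed bound $O\bigl(n\log s + nk\,2^{O(\log^* k)}\bigr)$ by a factor of $k$; the step where you ``fold the $\log k$ into the stated bounds'' does not repair this, since the dominant excess consists of the $nk\log s$ and $nk^2\,2^{O(\log^* k)}$ terms. The fix is exactly what the paper does: take batches of size $sk$, so that there are only $O(n/(sk))$ iterations per scan; the diagram size per iteration is still $O(sk)$, and the totals become $O\bigl(n\log(sk) + nk\,2^{O(\log^* k)}\bigr) = O\bigl(n\log s + nk\,2^{O(\log^* k)}\bigr)$ as claimed.

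Two smaller points. First, relevance need not be decided ``once the walk has converged'': the representation of $e_i^k$ already encodes whether its head vertex is an old or a new $k$-vertex (via whether the third site defining that vertex lies among the $k-1$ closest sites), and by property (III) the head vertex lies on the boundary of the containing $(k+1)$-cell exactly when it is a new $k$-vertex. So relevance is an $O(k)$-time local test on the input, performed before any scan of $P$. Second, your appeal to the ``farthest bisector'' and $\FVD(Q_i)$ conflates the $k$-edges \emph{inside} $C_i$ (which do form the farthest-site diagram of the defining sites, by property (II)) with the $(k+1)$-edges \emph{on} $\partial C_i$ that you are actually looking for; the correct selection mechanism is the angular criterion at the head vertex, which you do state correctly in your final paragraph, so this is a matter of exposition rather than substance.
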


\begin{proof}
Our algorithm proceeds analogously to Lemma~\ref{le:finding s edges}.
First, we inspect all $k$-half-edges $e_i^k$. If the head vertex $v$
of $e_i^k$ is an old $k$-vertex, then
$v$ is not a vertex of $\VD^{k+1}(P)$, and it lies in the interior 
of a $(k+1)$-cell, so $e_i^k$ is not relevant. Otherwise, $v$
is a new $k$-vertex and an old $(k+1)$-vertex, so it appears on
the boundary of a $(k+1)$-cell. In this case, we need to
determine the first $(k+1)$-half-edge for the interval assigned to $e_i^k$. 
Let $I$ be the set of all indices $i$ such that $e_i^k$
is relevant.

To determine the first half-edge of each interval, we process the
sites in $P$ in batches of size $sk$. In each iteration, 
we pick a new batch $Q$ of $sk$ sites. Then, we construct 
$\VD^{k+1}\bigl(\bigcup_{i \in I}E_i\, \cup \, Q\bigr)$ 
in $O\big(sk \log (sk) + sk^2 \, 2^{O(\log^* k)}\big)$ expected time 
or in $O(sk \log (sk) + sk^2\log k)$ deterministic time
(note 
that $\bigcup_{i \in I}E_i\, \cup \, Q$ contains  $O(sk)$ 
sites, so the diagram $\VD^{k+1}\bigl(\bigcup_{i \in I}E_i\, \cup \, Q\bigr)$
has complexity $O(sk^2)$)~\cite{chan2000random,ChanTs16}. 
By construction, the head vertex
of each $e_i^k$ with $i \in I$ belongs to the resulting diagram, 
and we can find each head vertex in $O(\log(sk^2)) = O(\log (sk))$ time by using 
a point location structure~\cite{bcko-cgaa-08}. 
Thus, we iterate over all batches, and for each $e^k_i$,
we determine the edge $f^{k+1}_i$ that appears in one of
the resulting diagrams such that (i) $f^{k+1}_i$ is incident to
the head vertex of $e^k_i$; (ii) $f^{k+1}_i$ is to the left
of the directed line spanned by $e^k_i$; and (iii) among
all such edges, $f^{k+1}_i$ makes the smallest angle with $e^k_i$; see Figure~\ref{fig:7c}.
We need $O(n/sk)$ iterations to find $f^{k+1}_i$.
Now, for each $i \in I$, the desired $(k+1)$-half-edge
$e^{k+1}_i$ is a subset of $f^{k+1}_i$. This is because, by property (I) there is
one site which is different in the second cell incident to $e^{k+1}_i$, and
this site exists in one of the 
batches. Thus, to find the other endpoint of $e_i^{k+1}$, as in 
Lemma~\ref{le:finding s edges}, we perform a second scan 
over $P$ in batches of $sk$ sites. As before, for each batch $Q$, 
we construct $\VD^{k+1}\bigl(\bigcup_{i \in I}E_i\, \cup \, Q\bigr)$ and 
we check, for each $i\in I$, where $f_i^{k+1}$ is cut-off in the new 
diagram. 
After scanning all the sites of $P$, we have the desired endpoint
of $e^{k+1}_i$. This is because the endpoint of $e^{k+1}_i$ is defined by 
one more site of $P$, and this site exists in one of the batches. 
We orient $e_i^{k+1}$ such that the
cell containing $e_i^k$ lies to the left of it.

It follows that we can process $s$ edges of $\VD^k(P)$
in $O(n/sk)$ iterations, each of which
takes $O\big(sk \log (sk) + sk^2 \, 2^{O(\log^* k)}\big)$ expected time
or $O(sk \log (sk) + sk^2\log k)$ deterministic time. Thus, we get 
$O\big(n \log s + nk \, 2^{O(\log^* k)}\big)$ total expected time or 
$O(n \log s + nk\log k)$ total deterministic time, 
using a workspace with $O(sk^2)$ 
words (for storing the intermediate Voronoi diagrams). Note that
the term $n \log(sk)$ is substituted by $n \log(s)$, since 
$n \log(sk) =  n \log s + n \log k$, and since $n \log k$ is 
dominated by $nk$ in the total running time.
\end{proof}

The algorithm from Lemma~\ref{le:s edges of k+1 using k} is
actually more general. If, instead of a $k$-half-edge $e_i^k$ that
lies inside a $(k+1)$-cell $C$,
we have a $(k+1)$-half-edge $e_i^{k+1}$  that lies on the
boundary of $C$, the same method of
processing $P$ in batches of size $sk$ allows us to find the
next $(k+1)$-half-edge incident to $C$ in counterclockwise order 
from $e_i^{k+1}$.
These two kinds of edges can be handled simultaneously.

\begin{corollary}\label{co:k_to_k+1}
Let $e_i$ denote either a $k$-half-edge or a $(k+1)$-half-edge.
Suppose we are given $s$ such half-edges $e_1, \dots, e_s$.
Then, we can find 
in total expected time $O\big(n\log s + nk \, 2^{O(\log^* k)}\big)$ or
in total deterministic time $O(n\log s + nk \log k)$ and using 
$O(sk^2)$ words of workspace a sequence $f_1, \dots, f_s$ of 
$(k+1)$-half-edges such that, for $i = 1, \dots, s$,
we have
\begin{enumerate}
\item if $e_i$ is a relevant $k$-half-edge, then  $f_i$ is the first
$(k+1)$-half-edge  of the interval for $e_i$;
\item if $e_i$ is a $k$-half-edge that is not relevant, then 
$f_i$ is null;
\item
   if $e_i$ is a $(k+1)$-half-edge, then $f_i$ is 
   the counterclockwise successor of $e_i$.
\end{enumerate}
\end{corollary}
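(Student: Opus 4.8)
The plan is to run the two-pass batched sweep of Lemma~\ref{le:s edges of k+1 using k}, but with the selection rule at the relevant vertex tuned so that one sweep handles both kinds of input half-edges at once. First I would split the index set into $J=\{i : e_i \text{ is a } k\text{-half-edge}\}$ and $L=\{i : e_i \text{ is a } (k+1)\text{-half-edge}\}$. For $i\in J$, I inspect the head vertex $v_i$ of $e_i$, which is encoded in the representing set $E_i$: if $v_i$ is an old $k$-vertex then, by property (III), $v_i$ is a new $(k-1)$-vertex, hence lies in the \emph{interior} of the $(k+1)$-cell containing $e_i$, so $e_i$ is not relevant and I set $f_i$ to null (case 2). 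Otherwise $v_i$ is a new $k$-vertex, hence an old $(k+1)$-vertex on the boundary of the $(k+1)$-cell $C$ containing $e_i$, and the first $(k+1)$-half-edge of the interval assigned to $e_i$ is exactly the $(k+1)$-half-edge incident to $v_i$ that lies to the left of the directed line spanned by $e_i$ and makes the smallest angle with it (criteria (i)--(iii) of Lemma~\ref{le:s edges of k+1 using k}). For $i\in L$, the head vertex $w_i$ of $e_i$ is a $(k+1)$-vertex, again reconstructible from $E_i$, and the counterclockwise successor of $e_i$ is the $(k+1)$-half-edge of $\partial C$ leaving $w_i$; as noted in the remark after Lemma~\ref{le:s edges of k+1 using k}, this is found by the same ``correct side / smallest angle at the shared vertex'' rule. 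Thus in both nontrivial cases the task reduces to: given a vertex on $\partial C$ that is determined by $E_i$, report the next $(k+1)$-half-edge of $\partial C$ in counterclockwise order.

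Next I would carry out the batched construction verbatim as in Lemma~\ref{le:s edges of k+1 using k}. In the first scan, I group $P$ into $O(n/(sk))$ batches $Q$ of $sk$ sites; for each batch I build $\VD^{k+1}\bigl(\bigcup_{i} E_i \cup Q\bigr)$, which has $O(sk)$ sites and complexity $O(sk^2)$, in $O\bigl(sk\log(sk)+sk^2\,2^{O(\log^* k)}\bigr)$ expected or $O\bigl(sk\log(sk)+sk^2\log k\bigr)$ deterministic time; with a point-location structure I locate, for every active $i$, the relevant vertex in this diagram in $O(\log(sk))$ time and update the current best candidate $f_i$ by the angular rule above. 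Since the sought $(k+1)$-half-edge differs from $e_i$ (resp.\ bounds $C$ together with a neighboring $(k+1)$-cell that uses only one extra site, by property (I)), that site occurs in some batch, so after all batches $f_i$ spans the correct bisector. A second scan over $P$ in batches of $sk$ sites then truncates each $f_i$ to the true $(k+1)$-half-edge of $\VD^{k+1}(P)$ --- each of its endpoints is cut off by one further site of $P$, again present in some batch --- and I orient $f_i$ so that the appropriate $(k+1)$-cell lies to its left. The bounds are inherited from Lemma~\ref{le:s edges of k+1 using k}: $O(n/(sk))$ iterations of the stated cost give total expected time $O\bigl(n\log s+nk\,2^{O(\log^* k)}\bigr)$ (resp.\ deterministic time $O\bigl(n\log s+nk\log k\bigr)$), and the workspace is dominated by the $O(sk^2)$ words holding one intermediate diagram, plus $O(sk)$ words for the $s$ representing sets.

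The only genuinely new point over Lemma~\ref{le:s edges of k+1 using k}, and the step I expect to need the most care, is checking that one angular selection rule at the shared vertex is correct in all three cases simultaneously: that for a relevant $k$-half-edge it yields the clockwise-endpoint edge of its interval, that for a $(k+1)$-half-edge it yields the counterclockwise successor along the incident cell, and that these two notions of ``next half-edge of $\partial C$ at a vertex'' really coincide --- together with verifying that the head vertex of a $(k+1)$-half-edge is fully determined by its representation (the three sites spanning it all lie in $E_i$), so it appears in every diagram $\VD^{k+1}\bigl(\bigcup_i E_i\cup Q\bigr)$ built during the sweep. Once this rotational bookkeeping around the shared vertex is pinned down, the correctness of the batched reconstruction and the complexity analysis follow from Lemma~\ref{le:s edges of k+1 using k} without change.
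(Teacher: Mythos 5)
Your proposal is correct and follows essentially the same route as the paper: the corollary is obtained by observing that the batched two-scan algorithm of Lemma~\ref{le:s edges of k+1 using k} applies verbatim when the starting half-edge already lies on the boundary of the $(k+1)$-cell, with the same head-vertex location and angular selection rule handling both input types in one sweep. The points you flag for care (consistency of the selection rule at the shared vertex, and that the head vertex is determined by the representing set so it appears in every intermediate diagram) are exactly the properties the lemma's proof already relies on, so nothing further is needed.
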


\begin{lemma}\label{induction}
Using two scans over all $k$-half-edges, we can report all 
$(k+1)$-half-edges 
in batches of size at most $s$ such that each $(k+1)$-half-edge is 
reported exactly once.
This takes 
$O\bigl(\frac{n^2 k}{s}(\log s + k \, 2^{O(\log^* k)})\bigr)$ expected time or
$O\bigl(\frac{n^2 k}{s}(\log s + k\log k)\bigr)$ deterministic time 
using $O(sk^2)$ words of workspace.
\end{lemma}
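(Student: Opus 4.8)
The plan is to lift the three-phase scheme behind Theorem~\ref{T:nvd_fvd_ts} to higher order, using Corollary~\ref{co:k_to_k+1} in place of Lemma~\ref{le:finding s edges}. Throughout I keep an \emph{active set} of at most $s$ relevant $k$-half-edges. For each active $k$-half-edge $e$, whose representation already identifies the $(k+1)$-cell $C$ containing $e$ (by property~(I) the $k-1$ sites closest to $e$ together with the two sites equidistant to $e$ form the $(k+1)$-set of $C$), I walk the interval of $(k+1)$-half-edges assigned to $e$ one half-edge at a time: a single invocation of Corollary~\ref{co:k_to_k+1} simultaneously advances all active walks by one $(k+1)$-half-edge (case~1 produces the first half-edge of each interval, case~3 the counterclockwise successor along $\partial C$, and case~2 signals that a $k$-half-edge is not relevant, so its slot is freed at once). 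By properties~(II) and~(III), the vertices of $\partial C$ that are heads of relevant $k$-half-edges are exactly the old $(k+1)$-vertices on $\partial C$, and the old/new type of a vertex is readable from the $O(k)$-site representation of a half-edge; hence the walk of an interval terminates precisely when it reaches an old $(k+1)$-vertex (or runs off to infinity for an unbounded cell, handled as the unbounded Voronoi edges were in Section~\ref{sec:o1_nvd_fvd}). Whenever a slot is freed I refill it with the next $k$-half-edge from the stream.

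\emph{First scan (identifying big cells).} I run the active-set process until fewer than $s$ interval walks remain unfinished; I declare the $(k+1)$-cells of those remaining walks \emph{big} and store their $(k+1)$-sets in a table $\mathcal{B}$ sorted by index, so membership costs $O(\log s)$. The iteration count is bounded exactly as in Theorem~\ref{T:nvd_fvd_ts}: each slot-step either consumes a fresh $k$-half-edge (there are $O(kn)$ of them) or discovers a $(k+1)$-half-edge of its interval, and each of the $O(kn)$ $(k+1)$-half-edges lies in exactly one interval and is discovered at most once; thus after $O(kn/s)$ iterations fewer than $s$ walks are open, so $|\mathcal{B}| < s$ and every small cell has had its entire boundary walked.

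\emph{Second scan (reporting).} I repeat the active-set process, but before walking the interval of a relevant $k$-half-edge $e$ I test whether its cell $C$ lies in $\mathcal{B}$; if so I skip $e$. Otherwise I walk the interval and, for each $(k+1)$-half-edge $f$ it contains, oriented so that $C$ is to its left, I report $f$, and in addition report the twin $\bar f$ if the cell on the other side of $f$ is big (the $(k+1)$-sets of both incident cells are again read off from the representation of $f$). A short case analysis over the three possibilities for the ordered pair (left cell, right cell) of a half-edge --- both small, big/small, small/big --- using that every small cell is fully walked, shows that each $(k+1)$-half-edge having at least one incident small cell is reported exactly once across this scan. The remaining $(k+1)$-half-edges, those incident to two big cells, are produced in a third part that does \emph{not} scan the $k$-half-edges: I compute $\VD^{k+1}$ of the $O(sk)$ sites appearing in the big cells and then, scanning $P$ in batches of $sk$ sites as in the second scan of Lemma~\ref{le:s edges of k+1 using k}, truncate each of its edges against $P$; for every surviving $(k+1)$-half-edge I report it iff the $(k+1)$-sets of the two cells it separates in $\VD^{k+1}(P)$ both lie in $\mathcal{B}$. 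Since the big cells' sites are a subset of $P$, every $(k+1)$-half-edge between two big cells of $\VD^{k+1}(P)$ appears (possibly as a sub-segment) in this diagram, so exactly the missing half-edges are reported, each once; the extra membership test is needed because, unlike for $\NVD$, $\VD^{k+1}$ of the big cells' sites may contain cells that are not big cells.

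\emph{Bounds and the hard part.} Each of the two scans performs $O(kn/s)$ invocations of Corollary~\ref{co:k_to_k+1}, each costing $O(n\log s + nk\,2^{O(\log^* k)})$ expected time (resp.\ $O(n\log s + nk\log k)$ deterministic time) and $O(sk^2)$ words, which gives the claimed running time; the third part costs only $O(n\log s + nk\,2^{O(\log^* k)})$ and is absorbed, and the output is emitted in batches of $O(s)$ half-edges. The workspace is $O(sk^2)$: the $O(s)$ active walks and the table $\mathcal{B}$ need $O(sk)$ words, and the intermediate diagrams $\VD^{k+1}$ of $O(sk)$ sites, built inside Corollary~\ref{co:k_to_k+1} and in the third part, have complexity $O(sk^2)$. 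I expect the main obstacle to be the exactly-once bookkeeping rather than the resource analysis: pinning down the interval-termination criterion through properties~(II) and~(III), fixing half-edge orientation conventions so that the small-cell walks together with the big--big part cover every half-edge exactly once, and, in the third part, recognising that $\VD^{k+1}$ of the big cells' sites can have spurious cells, so that the additional lookup in $\mathcal{B}$ is required to avoid duplicate reports.
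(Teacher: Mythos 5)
Your proposal is essentially the paper's own proof: the same three phases (a non-reporting scan to isolate the big $(k{+}1)$-cells, a reporting scan with the small/small-by-index and small-left/big-right rules, and a final truncation of $\VD^{k+1}$ of the big cells' sites against $P$ in batches), driven by Corollary~\ref{co:k_to_k+1} exactly as you describe, with the same iteration count and workspace accounting; you even correctly flag the subtlety that $\VD^{k+1}$ of the big cells' sites can contain spurious small cells, which forces the extra membership test in the third part. The one detail you elide is how to test membership of a $(k{+}1)$-cell in $\mathcal{B}$ cheaply: a sorted table of $(k{+}1)$-sets gives $O(k\log s)$ per query, i.e.\ $O(nk^2\log s)$ overall, which is not clearly absorbed by the stated bound for large $s$; the paper instead keys each cell by the center of gravity of its defining sites, which Aurenhammer's duality guarantees to be a unique $O(1)$-word identifier, yielding $O(k+\log s)$ per query.
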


\begin{proof}
The algorithm consists of three phases analogous of the ones introduced in Section~\ref{sec_trade}:
in the first phase, we aim at finding the big cells. 
Let $e_i$ denote either a $k$-half-edge or a $(k+1)$-half-edge. 
To find
the big cells we keep $s$ such half-edges $e_1, \dots, e_s$ in memory.
At the beginning of this phase, $e_1, \dots, e_s$ are all $k$-half-edges.
In each iteration, we apply
Corollary~\ref{co:k_to_k+1} to these half-edges, to obtain
$s$ new $(k+1)$-half-edges $f_1, \dots, f_s$. Now, for each
$i = 1, \dots, s$, three cases can apply:
(i) $f_i$ is \emph{null}, i.e., $e_i$ was not relevant.
In the next iteration, we replace $e_i$ with a fresh $k$-half-edge;
(ii)/(iii) $f_i$ is not \emph{null}.  Now we need to determine 
whether $f_i$ is the
last $(k+1)$-half-edge of its interval. For this, we check
whether the head vertex of $f_i$ is an old $(k+1)$-vertex.
(ii) If $f_i$ is not the
last $(k+1)$-half-edge of its interval, i.e., if its head vertex 
is a new $(k+1)$-vertex, we set $e_i$ to  
$f_i$ for the next iteration; otherwise, (iii) we set $e_i$ to a fresh 
$k$-half-edge.
We repeat this procedure until there are no fresh $k$-half-edges
left.

The remaining $(k+1)$-half-edges in the working memory are incident to 
the \emph{big} $(k+1)$-cells. 
For each such cell, we store the \emph{center of gravity} of its
defining sites in an array $\mathcal{B}^{k+1}$, sorted according
to lexicographic order.
We emphasize that in the first phase, we do not report any 
$(k+1)$-half-edge.

In the second phase, we repeat the same procedure as in the first phase, 
but now that we know the big $(k+1)$-cells, we can report edges.
In order to avoid repetitions, we only report (i) every $(k+1)$-half-edge 
incident to a
small $(k+1)$-cell; and (ii) the opposite direction of every 
$(k+1)$-half-edge $e$ incident to a
small $(k+1)$-cell, so that the $(k+1)$-cell on the right of $e$ is a 
big $(k+1)$-cell.
We use $\mathcal{B}^{k+1}$ to identify the big cells, by
locating the center of gravity of the defining sites of a cell
in $\mathcal{B}^{k+1}$ with a binary search, see below for details.

In the third phase, we report every $(k+1)$-half-edge $e$ that is incident 
to a big $(k+1)$-cell, while the $(k+1)$-cell on the right of $e$
is also a big $(k+1)$-cell.
Let $\{\mathcal{B}^{k+1}\}$ denote the sites that define the big $(k+1)$-cells.
We construct $\VD^{k+1}(\{\mathcal{B}^{k+1}\})$ in the working memory.
Then, we go through the sites in $P$ in batches of size $sk$,
adding the sites of each batch to $\VD^{k+1}(\{\mathcal{B}^{k+1}\})$. While 
doing this, as in the
algorithm for Lemma~\ref{T:nvd_fvd_ts}, we keep track
of how the edges of $\VD^{k+1}(\{\mathcal{B}^{k+1}\})$ are cut by the corresponding cell in the new diagrams. 
In the end, we report all $(k+1)$-edges of $\VD^{k+1}(\{\mathcal{B}^{k+1}\})$
that are not empty. By \emph{report}, we mean report two $(k+1)$-half-edges
in opposing directions. As we explained in the algorithm for Lemma~\ref{T:nvd_fvd_ts},
these $(k+1)$-half-edges cover all the $(k+1)$-half-edges incident to a big $(k+1)$-cell, 
while their right cell is also a big $(k+1)$-cell.

Regarding the running time, the first and the second phase consist of 
$O(nk/s)$ applications of Corollary~\ref{co:k_to_k+1} which takes 
$O\bigl(\frac{n^2 k}{s}(\log s + k \, 2^{O(\log^* k)})\bigr)$ total expected time 
or
$O\bigl(\frac{n^2 k}{s}(\log s + k  \log k)\bigr)$ total deterministic time. 
Creating the array $\mathcal{B}^{k+1}$ to represent the big cells 
takes $O(sk + s \log s)$ 
steps: 
we compute the center of gravity of the defining sites for 
each big $(k+1)$-cell in $O(k)$ steps. Then we sort these 
center points in lexicographcic order 
in $O(s\log s)$ steps.
A query in $\mathcal{B}^{k+1}$ takes $O(k + \log{s})$ time: 
given a query $(k+1)$-cell $C$, we compute the center of gravity for
its defining sites in $O(k)$ time. Then we use binary-search 
in $\mathcal{B}^{k+1}$ to find a big $(k+1)$-cell with the same 
center of gravity.
Aurenhammer~\cite{Aurenhammer90} showed that these centers are
pairwise distinct, so that a $(k+1)$-cell can be uniquely identified
by the center of gravity of its defining sites.\footnote{To be 
precise, Aurenhammer~\cite[Theorem~1]{Aurenhammer90} showed the 
following: take the standard lifting of $P$ onto the unit paraboloid
and compute the center of gravity for each subset of $k + 1$ lifted 
points. Call the resulting point set $R$. Then, the vertical 
projection of the lower convex hull of $R$ is dual to $\VD^{k + 1}(P)$.
In particular, the vertices of the projection are the 
centers of gravity of the defining sites for the  cells of 
$\VD^{k + 1}(P)$. Therefore, they must be pairwise distinct: 
otherwise, they could not all appear on the lower convex hull.} 

The algorithm performs at most two queries in $\mathcal{B}^{k+1}$ per 
$(k+1)$-half-edge, for a total of $O(nk)$ edges. Thus, the total 
time for the queries is $O(nk^2 + nk\log s)$.
In the third phase, constructing a $(k+1)$-order 
Voronoi diagram of $O(sk)$ sites takes 
$O(sk\log s + sk^2 \, 2^{O(\log^* k)})$ expected time or
$O(sk\log s + sk^2\log k)$ deterministic time.
We repeat it $O(n/sk)$ times, which takes 
$O(n\log s + nk \, 2^{O(\log^* k)})$ expected time or
$O(n\log s + nk\log k)$ deterministic time 
in total.
 
Overall, the running time of the algorithm simplifies to
$O\bigl(\frac{n^2 k}{s}(\log s + k \, 2^{O(\log^* k)})\bigr)$ expected time 
or
$O\bigl(\frac{n^2 k}{s}(\log s + k\log k)\bigr)$ deterministic time.
The algorithm 
uses a workspace of $O(sk^2)$
words, for running Corollary~\ref{co:k_to_k+1}, for storing big 
$(k+1)$-cells and for constructing 
Voronoi diagrams with $O(sk)$ sites.
\end{proof}

Now, in order to find the $k$-half-edges for 
all $k=1, \dots, K$, we proceed as follows:
For a parameter $s'$ (that we will define later), we compute $s'$
different $1$-edges (we report every $1$-edge 
as two $1$-half-edges in opposing directions). Then, we apply 
Lemma~\ref{induction} (with parameter $s'$) in a pipelined fashion to obtain the 
$k$-half-edges for $k = 2, \dots, K$.
In each iteration, the algorithm from Lemma~\ref{induction} consumes
at most $s'$ different $k$-half-edges from the previous order and 
produces at most
$2s'$ new $(k+1)$-half-edges to be used at the next order.
This means that if we have 
between $s'$ and $3s'$ new $k$-half-edges available in a 
buffer, then we can use them 
one by one whenever the algorithm for
computing $(k + 1)$-half-edges in Lemma~\ref{induction} requires
such a new $k$-half-edge. Whenever the size of a buffer falls below
$s'$, we run the algorithm for the previous order
until the buffer size is again between $s'$ and $3s'$.
Applying this idea for all the orders $k = 1, \dots, K-1$,  
we need to store $K-1$ buffers, each containing
up to $3s'$ half-edges for the corresponding diagram.
Since a $k$-half-edge is represented by $O(k)$ sites
from $P$, the buffer for $k$-edges requires $O(s'k)$ words
of workspace. We call this the \emph{output buffer} and denote it by $\mathcal{O}^k$. 
Furthermore, for each $k$, we need to store
$O(s')$ half-edges that reflect the current 
state of the corresponding algorithm. 
This requires $O(s'k)$ words of workspace.
This is called the \emph{private workspace} and is 
denoted by $\mathcal{P}^{k}$. 
Finally, for the algorithm that 
is currently active, we need $O(s'k^2)$ words of workspace 
to compute the Voronoi diagram of order $k$ for the next batch
of $O(s'k)$ sites from $P$ (see Lemma~\ref{induction}). Since
this workspace is used by all the algorithms, it is
called the \emph{common workspace} and denoted by $\mathcal{C}$, see below.

\begin{figure}
  \begin{center}
    \includegraphics{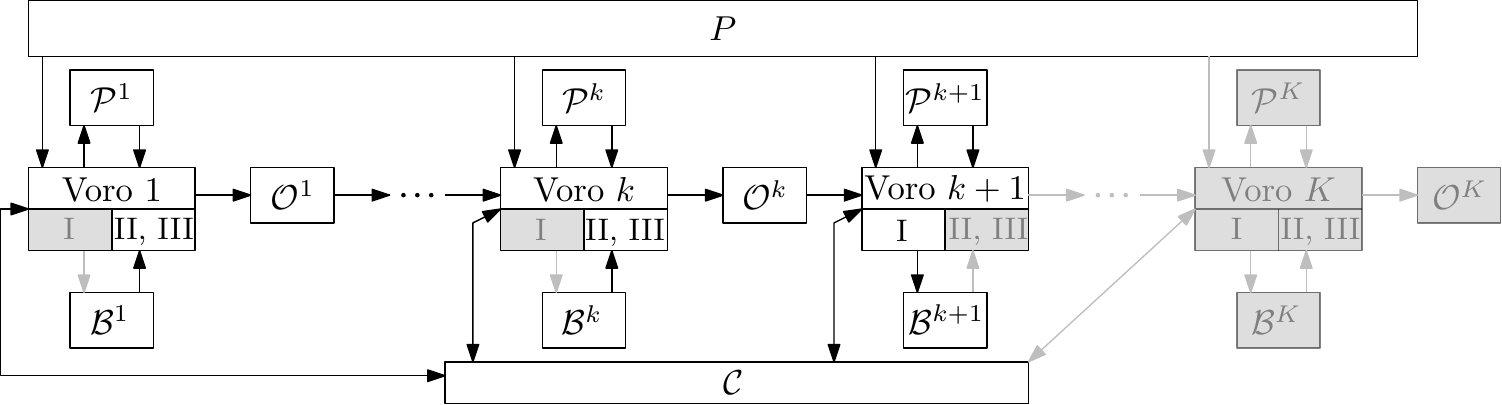}
  \end{center}
\caption{For $k'= 1,\dots, K$, Voro $k'$ is the processor 
for computing the $k'$-half-edges.
The roman numerals I, II and III refer to the 
first, second, and third phase of Voro $k'$. 
The memory cells 
$\mathcal{P}^{k'}$, $\mathcal{O}^{k'}$ and $\mathcal{B}^{k'}$
indicate the private workspace for Voro $k'$, the
output buffer for $k'$-edges, and the array for big $k'$-cells. 
The common memory of all the processors is called $\mathcal{C}$.
The figure shows the algorithm in \emph{stage} $k$.
The direction of the arrows indicates reading from or writing to 
memory cells.
The gray boxes and arrows show the inactive parts in stage $k$.
In stage $k$, the algorithm reads data from
$\mathcal{B}^1, \dots, \mathcal{B}^k$ and writes into $\mathcal{B}^{k+1}$.
In this stage, all the $k$-half-edges are reported and the 
big $(k+1)$-cells
are identified.}
\label{fig:8}
\end{figure}

\begin{theorem}
Let $P = \{p_1, \dots, p_n\}$ be a planar 
$n$-point set in general position, given in a read-only array.
Let $s$ be a parameter in $\{1, \dots, n \}$ and $K\in O(\sqrt{s})$.
We can report all the edges of 
$\VD^1(P), \dots, \VD^K(P)$
in $O\big(\frac{n^2 K^5}{s} (\log s + K \, 2^{O(\log^* K)}) \big)$ 
expected time or
in $O\big(\frac{n^2 K^5}{s} (\log s + K\log K) \big)$ 
deterministic time, using a workspace of size $O(s)$.
\end{theorem}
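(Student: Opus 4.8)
The plan is to set up the pipelined execution described before the theorem statement and to carefully account for the space and time of running all the processors $\mathrm{Voro}\,1, \dots, \mathrm{Voro}\,K$ concurrently, where $\mathrm{Voro}\,k'$ runs the algorithm of Lemma~\ref{induction} to turn $k'$-half-edges into $(k'+1)$-half-edges. First, I would fix the internal parameter $s'$ so that the total workspace is $O(s)$. From the discussion, $\mathrm{Voro}\,k'$ needs an output buffer $\mathcal{O}^{k'}$ of size $O(s'k')$, a private workspace $\mathcal{P}^{k'}$ of size $O(s'k')$, and the array $\mathcal{B}^{k'}$ of big $k'$-cells of size $O(s')$; summing over $k' = 1, \dots, K$ gives $O(s'K^2)$ words. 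The single shared common workspace $\mathcal{C}$ for the currently active processor needs $O(s'K^2)$ words (the Voronoi diagram of $O(s'K)$ sites of order $\le K$). Hence the total is $O(s'K^2)$, so choosing $s' = \Theta(s/K^2)$ keeps the workspace within $O(s)$; note $K \in O(\sqrt s)$ guarantees $s' \ge 1$.

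Next I would argue correctness of the pipeline. The key structural facts are already in place: Lemma~\ref{induction} (with its three-phase structure) guarantees that, given all $k'$-half-edges fed in batches, $\mathrm{Voro}\,k'$ reports each $(k'+1)$-half-edge exactly once, and each invocation consumes $\le s'$ and produces $\le 2s'$ half-edges. The buffering discipline — keep each buffer between $s'$ and $3s'$, and whenever a buffer underflows, wake the previous processor until it is replenished — ensures that a processor always has a fresh input half-edge available when it asks for one, and that no buffer ever overflows $3s'$. One subtlety to spell out is that the three phases of $\mathrm{Voro}\,k'$ must each see the \emph{entire} stream of $k'$-half-edges (phase I to find big cells, phase II to report, phase III for big–big edges); this is handled by having $\mathrm{Voro}\,(k'-1)$ re-run to regenerate its output stream as many times as needed, which only multiplies the work of each processor by a constant. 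I would also note that $\VD^1(P)$ itself is produced by the $s'$-workspace algorithm of Theorem~\ref{T:nvd_fvd_ts} (feeding $1$-half-edges in batches), so the base of the pipeline is consistent with the rest.

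For the running time, I would charge work to each processor separately. By Lemma~\ref{induction}, one full pass of $\mathrm{Voro}\,k'$ costs $O\bigl(\frac{n^2 k'}{s'}(\log s' + k'\,2^{O(\log^* k')})\bigr)$ expected (resp. with $k'\log k'$ deterministic). Because of the re-runs forced by the pipeline, $\mathrm{Voro}\,k'$ is executed $O(?)$ times — the crucial bookkeeping is that each of the $O(nk'/s')$ batches it produces is consumed, and each consumption by $\mathrm{Voro}\,(k'+1)$ may trigger a bounded number of re-runs of $\mathrm{Voro}\,k'$, cascading down; a clean way is to observe that the number of times $\mathrm{Voro}\,k'$ is invoked is at most the number of batches ever requested at level $K$, times a constant per level, giving an $O(K^{O(1)})$ blow-up that I would pin down to an extra $K^{2}$–$K^{3}$ factor. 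Combining with $s' = \Theta(s/K^2)$ replaces $1/s'$ by $K^2/s$, and summing over $k' \le K$ contributes another factor $K$; altogether the dominant term becomes $O\bigl(\frac{n^2 K^5}{s}(\log s + K\,2^{O(\log^* K)})\bigr)$ expected, and $O\bigl(\frac{n^2 K^5}{s}(\log s + K\log K)\bigr)$ deterministic, after absorbing $\log s'$ into $\log s$ and lower-order terms (the $O(nk'^2 + nk'\log s)$ query costs and the phase-III costs) into the main term since $K \in O(\sqrt s)$.

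The main obstacle I expect is the pipeline accounting: making precise how many times each processor $\mathrm{Voro}\,k'$ must be replayed so that \emph{all three of its phases} consume the full regenerated input stream from below, and verifying that this replay factor is only polynomial in $K$ (so that it is absorbed into the stated $K^5$). The space side is routine once $s' = \Theta(s/K^2)$ is fixed; the correctness side reduces to Lemma~\ref{induction} plus the invariant that buffers stay in $[s', 3s']$; but propagating the "each phase sees the whole stream" requirement down $K$ levels without an exponential-in-$K$ explosion is the delicate point, and is exactly why the extra $K$ powers (beyond the single $k$ in Lemma~\ref{induction}) appear in the final bound.
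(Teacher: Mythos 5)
There is a genuine gap, and it is precisely the one you flag yourself at the end: the replay accounting. Your time analysis contains an explicit ``$O(?)$'' for the number of times $\mathrm{Voro}\,k'$ is executed, and the mechanism you sketch --- ``each consumption by $\mathrm{Voro}\,(k'+1)$ may trigger a bounded number of re-runs of $\mathrm{Voro}\,k'$, cascading down'' --- does not yield a polynomial bound; taken literally, a constant number of re-runs per level per request compounds multiplicatively across $K$ levels and gives exactly the exponential explosion you worry about. The missing idea is an organizational one: the paper runs the computation in $K$ \emph{stages}, where at the start of stage $k$ the big-cell tables $\mathcal{B}^1,\dots,\mathcal{B}^k$ are already known from earlier stages. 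Stage $k$ then regenerates the half-edges of $\VD^1(P),\dots,\VD^k(P)$ by running (only) the second and third phases of each level, each level making a \emph{single} interleaved pass over its input within the stage (the buffer discipline pauses and resumes levels but never restarts them inside a stage), and feeds $\mathcal{O}^k$ into the first phase of $\mathrm{Voro}\,(k+1)$ to compute $\mathcal{B}^{k+1}$. Thus each level $k'$ is replayed once per stage $k\ge k'$, i.e.\ $O(K)$ times in total, and the cost is the clean double sum $\sum_{k=1}^{K}\sum_{k'=1}^{k} O\bigl(\tfrac{n^2 k'}{s'}(\log s' + k'\,2^{O(\log^* k')})\bigr) = O\bigl(\tfrac{n^2K^3}{s'}(\cdots)\bigr)$, which becomes $K^5/s$ after substituting $s'=s/K^2$. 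Your proposal never supplies this decomposition, so the ``extra $K^2$--$K^3$ factor'' remains an unproven guess rather than a derivation.

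A secondary, related omission: because every diagram of order $\le k$ is regenerated in every stage $\ge k$, each half-edge is produced many times, and one must say when it is actually written to the write-once output (the paper reports a $k$-half-edge only on its first insertion into $\mathcal{O}^k$, which happens in stage $k$). Your correctness argument appeals only to Lemma~\ref{induction}'s once-per-edge guarantee within a single pass and does not address the duplication across replays. The space accounting with $s'=\Theta(s/K^2)$ is correct and matches the paper.
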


\begin{proof}
We compute the half-edges of $\VD^1(P), \dots, \VD^K(P)$ 
in a pipelined fashion. The algorithm simulates having $K$ processors, 
each one computing a Voronoi diagram of different order. 
For $k= 1,\dots, K$, let Voro $k$ be the processor in charge of 
computing the Voronoi diagram of order $k$. We emphasize that the 
algorithm is sequential, but the analogy of $K$ 
processors helps our exposition. Set $s' = s/K^2$. 
The first processor Voro $1$ uses the algorithm of 
Theorem~\ref{T:nvd_fvd_ts}
with space parameter $s'$. For $k \geq 2$, Voro $k$
runs the algorithm from Lemma~\ref{induction}
to compute the $k$-half-edges with space parameter $s'$. Recall 
that Lemma~\ref{induction} requires $O(s'k^2)$ words of workspace. 
This space is needed for computing $\VD^{k}(P)$ for a set of $O(s'k)$ sites. 
However, when Voro~$k$ does not compute a diagram, it 
needs only a state of $O(s'k)$ words. 

Thus, all the processors share a common workspace $\mathcal{C}$ of size $O(s'k)$. 
At any point in time, $\mathcal{C}$ is used by 
a single processor Voro $k$ to compute $\VD^{k}(P)$ (for some 
$k \in \{1, \dots, K\}$). The local state and the other variables 
needed by each processor Voro $k$ are stored in a private workspace $\mathcal{P}^k$. 
In addition, Voro $k$ has an array $\mathcal{B}^k$ to store the big $k$-cells. Whenever an edge of 
$\VD^k(P)$ (for $k \in \{1, \dots, K\}$) would be reported, we instead 
insert it into an output buffer $\mathcal{O}^{k}$. Each of these 
local arrays
should be able to store $O(s')$ half-edges and cells of $\VD^k(P)$. 
Since we need $O(k)$ sites to represent a $k$-half-edge or a $k$-cell,
the total space requirement for all processors is $O(s'k^2)=O(s)$.
 
We simulate the parallel execution of the processors with \emph{stages}.
In stage~$0$, we perform only the first phase of Theorem~\ref{T:nvd_fvd_ts}, 
to find the $O(s')$ big cells of $\VD^1(P)$, and we store them in $\mathcal{B}^1$. 
Now, we know the big $1$-cells.
Then, in stage~$1$, we perform the second and the third phase of 
Theorem~\ref{T:nvd_fvd_ts} 
to find and report the half-edges of $\VD^1(P)$ in batches of size 
at most $2s'$.
When we find a batch of $1$-half-edges, 
we store them in $\mathcal{O}^1$. 
Whenever we have at least $s'$ half-edges 
in $\mathcal{O}^1$, we pause Voro~$1$, and we start Voro~$2$
to perform the first phase of Lemma~\ref{induction}
with $\mathcal{O}^1$ as input.
This gives the half-edges of $\VD^2(P)$. 
Whenever Voro~$2$ requires new $1$-half-edges,
and the buffer $\mathcal{O}^1$ falls below $s'$ half-edges,
we continue running Voro~$1$.
When Voro~$2$  has consumed all $1$-half-edges and 
there are less than $s'$ half-edges 
in $\mathcal{P}^2$, we stop Voro~$2$ (this is the end of the first phase of Lemma~\ref{induction}). 
The current half-edges in $\mathcal{P}^2$ represent the big cells of $\VD^2(P)$, 
and we store them in $\mathcal{B}^2$. This concludes the description 
of stage~$1$. 

In general, in stage $k$ of the algorithm, we have identified the big
cells $\mathcal{B}^1, \dots, \mathcal{B}^k$ of the first $k$ diagrams, and we want to use 
Voro~$k+1$ to identify the big cells of $\VD^{k+1}(P)$. For this, 
we perform the second and the third phase of Theorem~\ref{T:nvd_fvd_ts} 
and Lemma~\ref{induction}, for
all orders $1, \dots, k$,
in a pipelined fashion to generate all half-edges of 
$\VD^1(P), \dots, \VD^k(P)$,
and we store them in the buffers $\mathcal{O}^1, \dots, \mathcal{O}^{k}$.
We also use $\mathcal{O}^{k}$ as an input 
of the first phase of Lemma~\ref{induction}, 
which gives us $\mathcal{B}^{k+1}$ for the next stage; see Figure~\ref{fig:8}.
Stage $K$ is similar, but
we do not need to determine the big cells of order $K+1$.

By running the $K$ stages of the algorithm, we compute all the Voronoi 
half-edges and add them to the corresponding output buffers.  
The edges are computed more than once. Therefore, in order to make 
sure that they are written into the output memory 
only once, we report them only the first time they are inserted into the output buffers.
For the half-edges of $\VD^{k}(P)$, this happens
in stage $k$ of the algorithm. 
Thus, we can be certain that every half-edge 
of each diagram $\VD^1(P), \dots, \VD^K(P)$ is reported exactly once and in 
order or their containing diagrams (in other words, the $k$-half-edges 
are reported before the $(k+1)$-half-edges).

Regarding the running time, 
in each stage $k = 1, \dots, K$, we have to compute all diagrams
$\VD^1(P), \dots, \VD^k(P)$, using Lemma~\ref{induction}. 
This takes 
\[
\sum_{k'=1}^{k} O\Big(\frac{n^2 k'}{s'}\big(\log s' + k' \, 2^{O(\log^*{k'})}\big)\Big) = 
O\Big(\frac{n^2 k^2}{s'}\big(\log s' + k \, 2^{O(\log^* k)}\big)\Big)
\]
expected time in stage $k$. The running time for stage~$0$ is negligible.
The complete algorithm takes 
\[
  \sum_{k=1}^{K} 
O\Big(\frac{n^2 k^2}{s'}\big(\log s' + k \, 2^{O(\log^* k)}\big)\Big) = 
O\Big(\frac{n^2 K^3}{s'}\big(\log s' + K \, 2^{O(\log^* K)}\big)\Big)
\]
expected time for all stages $1$ to $K$. This is 
$O\Big(\frac{n^2 K^5}{s}\big(\log s + K \, 2^{O(\log^* K)}\big)\Big)$
in terms of $s$, since $s'=s/K^2$.
The analysis for the deterministic running time is completely 
analogous, replacing the term $2^{O(\log^* k)}$ by $\log k$.
\end{proof}

Note that our requirement that $K = O(\sqrt{s})$ was crucial in 
ensuring that the space constraints are not exceeded; we need 
$\Theta(k)$ words of workspace to store the necessary edges of 
each $\VD^{k}(P)$, for $k = 1, \dots, K - 1$, giving a total of 
$\Theta(K^2)$ words in our workspace.

\section{Conclusion}

There are several efficient algorithms that compute a 
specific higher-order Voronoi diagram without first finding the diagrams 
of lower order~\cite{ChanTs16,AgarwalBeMaSc98,Ramos99}. 
It would be interesting to extend any of them 
to obtain a general trade-off, or even an algorithm for constant 
workspace. 
For $k = 1$ and $k = n-1$, 
our running times come close to the sorting lower 
bound which says that the time-space product for 
sorting is $\Omega(n^2)$, where the space
is measured in \emph{bits}~\cite{bc-atstosgsmc-82}.
Although improvement by a logarithmic factor may be possible, 
the gap between upper and lower 
bounds is very small.

There is a much larger gap for general higher-order 
Voronoi diagrams. We are not aware of any lower bounds (beyond the 
sorting lower bound). In particular, it would be 
interesting to have a bound in terms of the order of the diagram 
(for example, show that $\Omega(n^2K^2/s)$ steps are needed to 
find the family of all Voronoi diagrams of order up to $K$ for a 
given $n$-point set using $s$ words of workspace).
Several questions remain also unsolved when looking at upper bounds. 
Even though we do not believe our algorithms to be optimal, it seems difficult to improve them drastically. Even 
in constant sized workspaces, we do not know how to 
improve over the naive running time of $O(n^4 K)$ 
that can be obtained by computing the whole arrangement and considering
each $k \in \{1, \dots, K\}$ individually.  

\paragraph{Acknowledgments.}
The authors would like to thank Luis Barba, Kolja Junginger, 
Elena Khramtcova, and Evanthia Papadopoulou for fruitful discussions 
on this topic. We would also like to thank the anonymous referees for 
their thoughtful comments and valuable hints that helped to improve this
work.

\newcommand{\SortNoop}[1]{}

\end{document}